\newcommand{\qed}{\hspace*{\fill}$\Box$}
\newtheorem{theorem}{Theorem}[section]
\newtheorem{lemma}[theorem]{Lemma}
\newtheorem{corollary}[theorem]{Corollary}
\newtheorem{claim}[theorem]{Claim}
\newenvironment{proof}[1][Proof. ]{\noindent {\bf #1 }}{\qed}
\newenvironment{proofof}[1]{\medskip \noindent {\bf{Proof of #1. }}}{\qed}
\newcommand{\argmax}{{\rm argmax}}
\begin{document}

\title{\Large Asymmetric Traveling Salesman Path and\\ Directed Latency Problems\footnote{A preliminary 
version of this paper appeared in the Proceedings of 21st Annual ACM-SIAM Symposium on 
Discrete Algorithms}}
\author{
Zachary Friggstad\thanks{Supported by NSERC and iCORE scholarships} \hspace{12mm}
Mohammad R. Salavatipour\thanks{Supported by NSERC and an Alberta Ingenuity New Faculty award} \hspace{12mm}
Zoya Svitkina\thanks{Supported by Alberta Ingenuity}\\
\\
\small{Department of Computing Science}\\ 
\small{University of Alberta}\\
\small{Edmonton, Alberta T6G 2E8, Canada}
}

\date{June 1, 2010}
\maketitle

\begin{abstract} 

We study integrality gaps and approximability of two closely related problems on directed graphs. 
Given a set $V$ of $n$ nodes in an underlying asymmetric metric and two
specified nodes $s$ and $t$, both problems ask to find an $s$-$t$ path visiting all other nodes. In 
the {\em asymmetric traveling salesman path problem} (ATSPP), the objective is to minimize the total cost of this path. In the {\em directed latency problem}, the objective is to minimize the sum of distances on this path from $s$ to each node.  Both of these problems are NP-hard. 
The best known approximation algorithms for ATSPP had 
ratio $O(\log n)$ \cite{chekuri:pal:atspp,feige:singh} until the very recent result that improves it to $O(\log n/\log \log n)$ \cite{asadpour:atsp,feige:singh}. However, only a bound of $O(\sqrt{n})$ for the integrality gap of its linear programming relaxation has been known. For directed latency, the best previously known approximation algorithm has a guarantee 
of $O(n^{1/2+\epsilon})$, for any constant $\epsilon>0$ \cite{nagarajan:ravi:latency}.

We present a new algorithm for the ATSPP problem that has an approximation ratio of $O(\log n)$, 
but whose analysis also bounds the integrality gap of the standard LP relaxation of ATSPP by the same factor. This solves an open problem posed in \cite{chekuri:pal:atspp}.  We then pursue
a deeper study of this linear program and its variations, which leads to an algorithm for the $k$-person ATSPP (where $k$ $s$-$t$ paths of minimum total length are sought) and an $O(\log n)$-approximation for the directed latency problem.  

\end{abstract}

\section{Introduction}
Let $G=(V,E)$ be a complete directed graph on a set of $n$ nodes and let 
$d : E \rightarrow \mathbb R^+$ be a cost function
satisfying the directed triangle inequality $d_{uw} \leq d_{uv} + d_{vw}$ for
all $u,v,w \in V$.  However, $d$ is not necessarily symmetric: it may be that
$d_{uv} \neq d_{vu}$ for some nodes $u,v \in V$.
In the metric Asymmetric Traveling Salesman Path Problem (ATSPP), we are also given
two distinct nodes $s,t \in V$.  The goal is to find a path
${s = v_1}, v_2, \ldots, {v_n = t}$
that visits all the nodes in $V$ while minimizing the sum $\sum_{j=1}^{n-1} d_{v_j v_{j+1}}$.
ATSPP can be used to model scenarios such as minimizing the total cost of travel
for a person trying to visit a set of cities on the way from a starting point to a destination. 
This is a variant of the classical Asymmetric Traveling Salesman Problem (ATSP), where the goal is to find a minimum-cost \emph{cycle} visiting all nodes.
In the $k$-person ATSPP, given an integer $k\geq 1$, the goal is to find $k$ paths from $s$ to $t$ such that every node is contained in at least one path and the sum of path lengths is minimized.

Related to ATSPP is the directed latency problem.  
On the same input, the goal is to find a 
path ${s = v_1}, v_2, \ldots, {v_n = t}$ that minimizes the sum of latencies of the nodes.  Here, the latency of node $v_i$ in the path
is defined as $\sum_{j = 1}^{i-1} d_{v_jv_{j+1}}$.
The objective can be thought of as minimizing the total waiting time of clients or
the average response time.
There are possible variations in the problem definition, such as asking for a cycle instead of a path, or specifying only $s$ but not $t$, but they easily reduce to the version that we consider. 
Other names used in the literature for this problem are the {\em deliveryman problem}
\cite{minieka} and the {\em traveling repairman problem} \cite{afrati}.

\subsection{Related work}~
Both ATSPP and the directed latency problem are closely
related to the classical Traveling Salesman Problem (TSP), 
which asks to find the cheapest Hamiltonian cycle in a complete undirected graph with edge costs \cite{lawler:lenstra,gutin}.
In general weighted graphs, TSP is not approximable. However, in most practical settings it can be assumed
that edge costs satisfy the triangle inequality
(i.e.\ $d_{uw} \leq d_{uv} + d_{vw}$).
Though metric TSP is still NP-hard, the well-known algorithm of 
Christofides \cite{christofides} has an approximation ratio of \nicefrac{3}{2}.
Later the analysis in \cite{shmoys:williamson,wolsey:heuristic} showed that this approximation algorithm actually bounds the integrality
gap of a linear programming relaxation for TSP known as the Held-Karp LP.
This integrality gap is also known to be 
at least $\nicefrac{4}{3}$.  Furthermore, for all
$\epsilon > 0$, approximating TSP within a factor of $\nicefrac{220}{219}-\epsilon$ is NP-hard \cite{papadimitriou:vempala}.
Christofides' heuristic was adapted to the problem of finding the cheapest Hamiltonian path
in a metric graph with an approximation guarantee of \nicefrac{3}{2} if at most one endpoint is specified or \nicefrac{5}{3} if both endpoints are given \cite{hoogeveen}.

In contrast to TSP, no constant-factor approximation for its asymmetric version is known. The current best approximation for ATSP is the very recent result of Asadpour et al.\ \cite{asadpour:atsp}, which gives an $O(\log n/ \log \log n)$-approximation algorithm. It also upper-bounds the integrality gap of the asymmetric Held-Karp LP relaxation by the same factor. 
Previous algorithms guarantee a solution of cost within  $O(\log n)$ factor 
of optimum \cite{frieze:galbiati,kleinberg:williamson,kaplan,feige:singh}. The 
algorithm of Frieze et al.\ \cite{frieze:galbiati} is shown to
upper-bound the  Held-Karp integrality gap by $\log_2n$ in \cite{williamson:msthesis}, and a different proof that bounds the integrality gap of a slightly weaker LP is obtained in 
\cite{nagarajan:ravi:polylog}. The best known lower bound on the Held-Karp integrality gap is essentially 2 \cite{charikar:goemans:karloff}, and tightening these bounds remains an important open problem. ATSP is NP-hard to approximate 
within $\nicefrac{117}{116} - \epsilon$ \cite{papadimitriou:vempala}.

The path version of the problem, ATSPP, has been studied much less than ATSP, but there are some recent results concerning its approximability.
An $O(\sqrt{n})$ approximation algorithm for it was given by Lam and 
Newman \cite{lam:newman}, which 
was subsequently improved to $O(\log n)$ by Chekuri and Pal \cite{chekuri:pal:atspp}. 
Feige and Singh \cite{feige:singh} improved upon this guarantee by a constant factor and also showed that the 
approximability of ATSP and \mbox{ATSPP} are within a constant factor of each other, i.e.\ an $\alpha$-approximation for one implies an $O(\alpha)$-approximation for the other. Combined with the result of \cite{asadpour:atsp}, this implies an $O(\log n/ \log \log n)$ approximation for ATSPP. 
However, none of these algorithms bound the integrality gap of the LP relaxation for ATSPP. This integrality gap was considered by Nagarajan and 
Ravi \cite{nagarajan:ravi:latency}, who showed that it is at most $O(\sqrt{n})$.
To the best of our knowledge, the asymmetric path version of the $k$-person problem has not been studied previously. However, some work has been done on its symmetric version, where the goal is to find $k$ rooted cycles of minimum total cost (e.g., \cite{frieze:kperson}).

The metric minimum latency problem is NP-hard for both the undirected and directed versions since an exact algorithm for either of these could be used to efficiently solve the Hamiltonian Path problem.
The first constant-factor approximation for minimum latency  on undirected graphs was 
developed by Blum et al.\ \cite{blum:chalasani}. This was subsequently improved in 
a series of papers from 144 to 21.55 \cite{goemans:kleinberg}, then to 7.18 \cite{archer:levin:williamson} and ultimately to 3.59 \cite{chaudhuri}. 
Blum et al.\ \cite{blum:chalasani} also observed that
there is some constant $c$ such that there
is no $c$-approximation for  minimum latency unless P = NP.
For directed graphs, Nagarajan and Ravi \cite{nagarajan:ravi:latency}
gave an $O((\rho+\log n)\, n^{\epsilon}\, \epsilon^{-3})$ approximation algorithm that runs in time $n^{O(1/\epsilon)}$,
where $\rho$ is the integrality gap of an LP relaxation for ATSPP.
Using their $O(\sqrt n)$ upper bound on $\rho$, they 
obtained a guarantee of $O(n^{1/2 + \epsilon})$, which is the best approximation ratio known for this problem before our present results.

\subsection{Our results}~
In this paper we study both the ATSPP and the directed latency problem.
\begin{figure}[ht]
\begin{align} 
\mbox{min}~~ & \sum_{e\in E} d_e x_e  \label{alphalp} \\
\mbox{s.t.}~~ & x(\delta^+(u))=x(\delta^-(u)) \hspace{9mm} \forall u\in V \setminus \{s,t\} \label{flcon} \\
& x(\delta^+(s))=x(\delta^-(t))=1  \label{con2}  \\
& x(\delta^-(s))=x(\delta^+(t))=0  \label{con3} \\
& x(\delta^-(S))\geq \alpha \hspace{20mm} \forall S\subset V, S \neq \emptyset, s\notin S \label{alp-sc} \\
& x_e \geq 0  \hspace{32mm}  \forall e\in E  \notag 
\end{align}
\end{figure}
The natural LP relaxation for ATSPP is (\ref{alphalp}) with $\alpha=1$, where $\delta^+(\cdot)$ denotes the set of outgoing edges from a vertex or a set of vertices, and $\delta^-(\cdot)$ denotes the set of incoming edges. A variable $x_e$ indicates that edge $e$ is included in a solution.
Let us refer to this linear program as LP($\alpha$), 
as we study it for different values of $\alpha$ in constraints~(\ref{alp-sc}). 
We begin in Section \ref{sec:atspp-ig} by proving that the integrality gap of LP($\alpha=1$) is $O(\log n)$.

\begin{theorem}\label{thm:ig}
If $L$ is the cost of a feasible solution to LP (\ref{alphalp})
with $\alpha =1$, then one can find, in polynomial time,
a Hamiltonian path from $s$ to $t$ with cost at most
$(2 \log n +1) \cdot L$.\footnote{All logarithms in this paper are base 2.}
\end{theorem}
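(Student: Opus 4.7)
The approach I would take is modeled on the Frieze--Galbiati--Maffioli iterated-cycle-cover algorithm for ATSP, modified to handle the $s$-$t$ path constraint. Define an \emph{$s$-$t$ path-cycle cover} to be a subgraph $F\subseteq E$ consisting of exactly one directed $s$-$t$ path together with a family of vertex-disjoint directed cycles covering the remaining vertices. The minimum-cost $s$-$t$ path-cycle cover can be computed in polynomial time by a min-cost bipartite perfect matching: split each vertex $v$ into an outgoing copy $v^+$ and incoming copy $v^-$, delete $t^+$ and $s^-$, and place each arc $(u,v)$ as a bipartite edge $\{u^+, v^-\}$ of cost $d_{uv}$. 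A perfect matching then corresponds exactly to a subgraph in which $s$ has out-degree $1$ and in-degree $0$, $t$ has in-degree $1$ and out-degree $0$, and every other vertex has in-degree and out-degree $1$, which forces the $s$-$t$ path-plus-cycles structure.

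\textbf{Key lemma.} The crux is to show that given any feasible $x$ with $d\cdot x = L$, the minimum cost of an $s$-$t$ path-cycle cover is at most $L$. The naive attempt of viewing $x$ itself as a fractional matching fails because LP~(\ref{alphalp}) with $\alpha=1$ only forces $x(\delta^+(v))\ge 1$ (via flow conservation combined with the cut constraint at $S=\{v\}$), not $x(\delta^+(v))=1$. My plan is to decompose $x$ as an $s$-$t$ flow of value $1$ plus a directed circulation, and then use an LP exchange / rerouting argument that exploits the triangle inequality to redistribute the excess flow at each internal vertex, thereby producing a feasible fractional point of the matching LP whose cost is still at most $L$. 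Integrality of the bipartite matching polytope then yields an integer $s$-$t$ path-cycle cover of cost at most $L$ in polynomial time.

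\textbf{Recursion and cost accounting.} Given the integer path-cycle cover $F = P\cup C_1\cup\cdots\cup C_q$ with $\mathrm{cost}(F)\le L$, I would contract each cycle $C_i$ to a super-node $c_i$ in the shortcut metric, obtaining a smaller asymmetric metric instance $G'$. The LP solution $x$, after contraction, remains feasible for the analogous LP on $G'$ with cost at most $L$. Since each $C_i$ has at least two vertices, the appropriate potential (the number of vertices) decreases by at least a factor of two per iteration, giving recursion depth $\log n$. Recursively one obtains a Hamiltonian $s$-$t$ path of $G'$ of cost at most $(2\log n' + 1) L$, and lifting each super-node $c_i$ back to a traversal of its cycle $C_i$ adds at most $\sum_i c(C_i) \le L$ by the triangle inequality. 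Combining the path-cycle cover cost with the lifting cost yields a recurrence $T(n) \le T(n/2) + 2L$ with base case $T(2) = L$, which unrolls to the advertised bound $(2\log n + 1) L$.

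\textbf{Main obstacle.} The single most delicate step is the key lemma, namely showing that the integer $s$-$t$ path-cycle cover has cost at most $L$. The subtlety is that a path-cycle cover with any cycle $C$ violates the ATSPP cut constraint $x(\delta^-(C))\ge 1$, so the path-cycle cover LP is \emph{not} a relaxation of LP~(\ref{alphalp}) and one cannot argue by dropping constraints; the cost bound must instead be proved through a flow-decomposition or LP-exchange argument that exploits the triangle inequality. A secondary but non-trivial issue is ensuring that the vertex-count potential actually halves in the recursion even when the path part of $F$ is long; this likely requires a slightly finer potential function or an additional case analysis for iterations in which $|V(P)|$ is close to $n$.
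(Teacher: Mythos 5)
Your proposal has a genuine gap, and it is exactly at the point you dismiss as "secondary." In your recursion only the cycle vertices of the path-cycle cover get merged; the vertices on the path part are never removed, so the number of vertices need not drop by a constant factor. For instance, the minimum-cost cover at some level may consist of a long $s$-$t$ path plus a single $2$-cycle, shrinking the instance by one vertex, so the recursion depth can be $\Omega(n)$ and the accumulated lifting cost $\Omega(n)\cdot L$ rather than $O(\log n)\cdot L$. You also cannot repair this by deleting or "locking in" the path vertices, because the $s$-$t$ paths produced at different levels live on different vertex sets and cannot simply be concatenated into one Hamiltonian path; making the acyclic parts of many covers combinable is precisely the hard part of the theorem. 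The paper does not recurse on a contracted instance at all: it runs $2\log n+1$ iterations, accumulates the acyclic parts of all path-cycle covers into a single flow $F$, removes only the cyclic components while choosing each component's representative by an amortized label rule (Claim \ref{claim:xy}, Lemma \ref{lem:label}) so that no vertex loses more than $\log n$ units of flow. Then every surviving vertex carries more than half of the at most $2\log n+1$ flow paths of $F$ (Lemma \ref{lem:amtflow}), which forces any two consecutive vertices in a topological order of $F$ to be joined by an edge of $F$; this yields a Hamiltonian path on the survivors into which the Euler tours of the cyclic part $H$ are spliced at the representatives. Your proposal contains no mechanism playing the role of this majority/flow-path argument, and "a finer potential function" does not supply one.

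Your "key lemma," by contrast, is correct but easier than you think, and your stated obstacle there is misplaced: the paper proves it precisely by dropping constraints (Lemma \ref{lem:pathcyc}). Discarding the cut constraints for all non-singleton sets leaves a circulation-type LP with degree lower bounds, which has an integer optimum of cost at most $L$ since the ATSPP solution $x$ is feasible for it; Euler-touring each component of that integer solution and shortcutting repeated vertices brings all degrees down to one without increasing cost, and the result is exactly an $s$-$t$ path-cycle cover of cost at most $L$. No fractional rerouting into the matching polytope is needed. Finally, your assertion that "the LP solution, after contraction, remains feasible for the analogous LP on $G'$" is not automatic either; the paper establishes the corresponding statement for arbitrary vertex subsets via the directed splitting-off theorem (Theorem \ref{thm:so}), and any variant of your scheme would need a similar argument.
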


We note that, despite bounding the integrality gap, our algorithm is actually combinatorial and does not require solving the LP. 
We strengthen the result of Theorem \ref{thm:ig} by extending it to any 
$\alpha$ with $\frac{1}{2} < \alpha \leq 1$. 
This captures the LP of \cite{nagarajan:ravi:latency}, which has $\alpha=\frac{2}{3}$, and is also used in our algorithm for the directed latency problem.
We prove the following theorem in Section \ref{sec:atspp-alpha1}.

\begin{theorem}\label{thm:a-ig}
If $L$ is the cost of a feasible solution to LP (\ref{alphalp})
with $\frac{1}{2} < \alpha \leq 1$, then one can find, in polynomial time,
a Hamiltonian path from $s$ to $t$ with  cost at most $\frac{6\log n +3}{2\alpha-1} \cdot L$.
\end{theorem}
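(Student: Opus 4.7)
The plan is to deduce Theorem~\ref{thm:a-ig} from Theorem~\ref{thm:ig}. Given a feasible LP($\alpha$) solution $x$ of cost $L$, the goal is to construct a feasible LP($\alpha = 1$) solution $y$ of cost at most $\frac{3L}{2\alpha-1}$; Theorem~\ref{thm:ig} applied to $y$ then produces a Hamiltonian $s$-$t$ path of cost at most $(2\log n + 1) \cdot \frac{3L}{2\alpha - 1} = \frac{6\log n + 3}{2\alpha - 1} \cdot L$, matching the claimed bound.

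The construction I propose augments $x$ with a nonnegative circulation $z$ (a flow with balance at \emph{every} vertex, including $s$ and $t$) chosen so that $z(\delta^-(S)) \geq 1-\alpha$ for every $S$ with $s \notin S$. Then $y := x + z$ is an $s$-$t$ flow of value exactly $1$ whose every relevant cut has value at least $\alpha + (1-\alpha) = 1$, so $y$ is LP($1$)-feasible with cost $L + \text{cost}(z)$. The task then reduces to producing $z$ with $\text{cost}(z) = O(L/(2\alpha - 1))$.

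The first attempt, $z = \tfrac{1-\alpha}{\alpha}\,x$, achieves the required cut bound but is not balanced: it is an $s$-$t$ flow of value $\tfrac{1-\alpha}{\alpha}$. Closing the imbalance with a $t$-to-$s$ return flow would work in a symmetric metric but fails here, because the asymmetric distance $d^{*}_{ts}$ is not controlled by $L$. My plan is therefore to avoid routing flow directly from $t$ back to $s$ and to exploit the path decomposition $x = \sum_i \lambda_i \chi_{P_i}$ together with the slack $2\alpha - 1 > 0$: every non-terminal cut is crossed by $s$-$t$ paths of total weight strictly greater than $1/2$, so pairs of paths can be matched in such a way that each pair $(P_i, P_j)$ contributes a closed walk of the form ``forward along $P_i$, reversed along $P_j$,'' and the reversed cost is charged back to forward edges via the triangle inequality and the cut bound $\alpha$. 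Summing a suitable convex combination of such paired closed walks yields $z$ with cut coverage at least $1-\alpha$ and cost $O(L/(2\alpha - 1))$; the factor of $3$ in the theorem absorbs the combined overhead of scaling, the pairing step, and shortcutting.

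The main obstacle is precisely this pairing-and-amortization of reversed costs against forward costs in the asymmetric setting. The hypothesis $\alpha > 1/2$ is unavoidable here: it guarantees that the set of paths crossing any non-terminal cut carries majority weight, so each reversed segment in the paired walks can be shadowed by forward segments of bounded cost via the triangle inequality. As $\alpha \to 1/2$ the slack $2\alpha - 1$ vanishes and the construction degenerates, correctly reflecting the divergence of the approximation factor $\frac{6\log n + 3}{2\alpha - 1}$.
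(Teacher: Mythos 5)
There is a genuine gap: the circulation $z$, which is the heart of your reduction, is never actually constructed, and the mechanism you sketch for building it does not work in an asymmetric metric. Your paired closed walks go ``forward along $P_i$, reversed along $P_j$,'' but traversing a path of the support in reverse uses arcs $d_{vu}$ whose costs are completely uncontrolled by $L$; the directed triangle inequality only bounds shortcuts of \emph{forward} walks and gives no handle on reversed arcs. This is exactly the difficulty the whole paper is organized around (in the example of Figure~\ref{fig:badgap} the reverse arcs cost an arbitrarily large $D$ while $L$ stays bounded), so the sentence ``the reversed cost is charged back to forward edges via the triangle inequality and the cut bound $\alpha$'' is precisely the missing proof, not a routine step. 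There are also smaller problems with the target object itself: since $y=x+z$ must satisfy $y(\delta^-(s))=y(\delta^+(t))=0$, a circulation $z$ balanced at every vertex can touch neither $s$ nor $t$, so it cannot satisfy $z(\delta^-(S))\geq 1-\alpha$ for sets such as $S=\{t\}$; you would have to restrict the requirement to $S\subseteq V\setminus\{s,t\}$ (which does suffice, because cuts with $t\in S$ already carry one unit of $x$-flow), and you would further need $y$ to be returned on each surviving vertex set, or else argue separately that LP($1$)-feasibility on $V$ is enough. None of this is addressed, and it is not clear that an LP($1$)-feasible vector of cost $O(L/(2\alpha-1))$ exists at all; the theorem as stated does not require one.

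For comparison, the paper never converts the LP($\alpha$) solution into an LP($1$)-feasible vector. Instead it re-runs the machinery of Section~\ref{sec:atspp-ig}: on each surviving vertex set $W$ it applies splitting-off, scales the solution by $1/\alpha$, decomposes it into an acyclic path part $F_p$ and a cycle part $F_c$, and (Lemma~\ref{lem:tr}) builds a \emph{fractional path-cycle-cover} solution to LP~(\ref{intlp}) of cost at most $\frac{3}{2\alpha-1}L$: vertices carrying at least $\gamma$ units of $F_p$-flow are strung together along the topological order (consecutive vertices share at least $2\gamma-1/\alpha$ units of common forward flow, so the connecting arcs are paid by shortcutting forward flow paths --- no arc is ever reversed), and the remaining vertices are covered by scaling $F_c$ by $1/(1-\gamma)$. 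Integrality of LP~(\ref{intlp}) then yields an integral path-cycle cover of the same cost, and Algorithm~\ref{alg:ig} is run with these covers for $2\log n+1$ iterations, giving $\frac{6\log n+3}{2\alpha-1}\cdot L$. If you want to salvage your reduction-to-Theorem~\ref{thm:ig} plan, you would need to replace the reversal-and-charging step with an argument of this ``shared forward flow'' type, and at that point you are essentially reproving Lemma~\ref{lem:tr}.
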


It is worth observing that this theorem, together with the results of \cite{nagarajan:ravi:latency}, imply a polylogarithmic approximation algorithm for the directed latency problem which runs in quasi-polynomial time, as well as a polynomial-time $O(n^\epsilon)$-approximation.
However, that approach relies on guessing a large number of intermediate vertices of the path, and thus does not yield an algorithm that has both a polynomial running time and a polylogarithmic approximation guarantee. 
So, to obtain a polynomial-time approximation, we use a different approach. 
For that we consider LP($\alpha$) for values of $\alpha$ that include $\alpha\leq \frac{1}{2}$.
If we allow $\alpha \leq \frac{1}{2}$ then LP($\alpha$),
as a relaxation of ATSPP, can be shown to have an unbounded integrality gap.
However, we prove the following theorem in Section \ref{sec:atspp-alpha2}.

\begin{theorem}\label{thm:kpath}
If $L$ is the cost of a feasible solution to 
LP (\ref{alphalp}) with
$\alpha = \frac{1}{k}$, for integer $k$, then one can find, in polynomial time,
a collection of at most $k\cdot \log n$ paths from $s$ to $t$, such 
that each vertex of $G$ appears on at least one path,
and the total cost of all these paths is at most $kL \log n$.
\end{theorem}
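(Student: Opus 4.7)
The plan is to adapt the iterative cycle-cover-and-shortcut framework behind Theorem~\ref{thm:ig} so that each round produces $k$ parallel $s$-$t$ paths instead of a single one. The starting observation is that if $x$ is feasible for LP($1/k$) with cost $L$, then $y := k \cdot x$ is a fractional $k$-path cover of cost $kL$: it routes $k$ units of flow from $s$ to $t$, respects flow conservation at internal vertices, and satisfies $y(\delta^-(S)) \geq 1$ for every nonempty $S \subsetneq V$ with $s \notin S$. This scaled solution is the natural LP relaxation of the integral object we are trying to produce in each round.

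The algorithm then proceeds for $\lceil \log n \rceil$ rounds, maintaining a shrinking ``active'' subset of vertices together with a feasible fractional $k$-path cover on the induced asymmetric metric of cost at most $kL$; the latter is obtained from $y$ by shortcutting contracted components, which only decreases distances. In each round I would invoke the following subroutine: given a fractional $k$-path cover of cost $C$ on an asymmetric metric, compute in polynomial time an integer subgraph $H$ whose edge set decomposes into exactly $k$ paths from $s$ to $t$ plus a collection of vertex-disjoint cycles, every vertex is covered by $H$, and the total cost of $H$ is at most $C$. I would establish this by splitting each internal vertex $v$ into $v_{\text{in}}, v_{\text{out}}$ joined by a zero-cost arc with lower bound~$1$, replacing each original edge $(u,v)$ by $(u_{\text{out}}, v_{\text{in}})$, and then looking for a min-cost integer $s$-to-$t$ flow of value $k$. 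Integrality of min-cost flow polyhedra with integer bounds produces $H$ via unit-path decomposition, while $y$ itself projects to a feasible fractional solution of cost $C$ certifying the cost bound.

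Having applied the subroutine, I append the $k$ resulting $s$-$t$ paths to the growing output collection and then contract each returned cycle to a single super-node in order to form the active set of the next round. A standard averaging argument, in the spirit of the one used to establish Theorem~\ref{thm:ig}, shows that the number of active vertices drops by at least a factor of two per round: each contracted cycle absorbs at least two vertices, and any remaining vertex is interior to one of the $k$ extracted paths. After $\lceil \log n \rceil$ rounds the active set collapses to $\{s,t\}$, so the total number of paths accumulated is at most $k \lceil \log n \rceil$, and since each round contributes cost at most $kL$, the total cost is at most $kL \lceil \log n \rceil$, as claimed.

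The step I expect to be the main obstacle is the integrality subroutine. The case $k=1$ reduces to min-cost cycle cover via the addition of a $t \to s$ shortcut arc and is classically integral by bipartite matching, but the multi-path generalization requires careful handling of the vertex-splitting lower bounds so that the integer optimum really decomposes into $k$ paths from $s$ to $t$ together with vertex-covering cycles, rather than into degenerate structures that fail to cover some vertex. A secondary subtlety is verifying that after contracting the cycles at the end of a round, the residual projection of $y$ remains a feasible fractional $k$-path cover of cost at most $kL$ on the contracted graph, so that the inductive invariant survives into the next round. Once both points are in place, the halving argument and per-round cost bound combine to yield $k \lceil \log n \rceil$ paths of total cost at most $kL \lceil \log n \rceil$.
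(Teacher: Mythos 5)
Your round structure matches the paper's: scale the LP($\tfrac1k$) solution by $k$, use integrality of the resulting flow/circulation LP (the cut constraints for non-singleton sets being dropped) to extract, at cost at most $kL$, an integral cover of the current vertex set by $k$ $s$-$t$ paths plus cycles, and iterate at most $\log n$ times because every cycle absorbs at least two vertices. This per-round extraction is exactly Lemma~\ref{lem:kpathcyc}, and the two issues you flag are the manageable ones: the ``degenerate structures'' worry is resolved as in the paper by Euler-touring and shortcutting components whose vertices carry more than one unit (or by the matching reduction with $k$ copies of $s$ and $t$), and the survival of the fractional cover after removing vertices follows from shortcutting a path/cycle decomposition (singleton lower bounds persist, and the cut constraint for a contracted cycle's vertex set gives at least one unit through the super-node).

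The genuine gap is in the final assembly. You append the $k$ paths of round $i$ directly to the output, but these are paths of the \emph{contracted} graph: they pass through super-nodes, so they are not paths of $G$, and the original vertices hidden inside a super-node (possibly nested through several rounds of contraction) never appear on any output path. You need a recursive expansion step replacing each super-node on an output path by a traversal of its cycle, together with a cost accounting for entering and leaving that cycle. With the natural min-distance contraction this is not free: the in-edge pays the distance to one cycle vertex and the out-edge pays the distance from a possibly different one, and covering the whole cycle in between can force reuse of cycle edges, so the easy bound loses a factor of roughly $2$ on cycle costs and only yields $O(kL\log n)$ rather than your claimed $kL\lceil\log n\rceil$. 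A clean fix is to contract to a fixed representative vertex (enter the representative, traverse the cycle once back to it, then continue), but the paper sidesteps expansion entirely: it never contracts, keeps one representative per cycle in the original metric while deleting path vertices, and at the end takes the union of all $T\le\log n$ covers, argues this union is connected (in the last iteration every surviving vertex lies on a path, and each cycle left behind a representative), adds $kT$ dummy $t\to s$ edges to make it Eulerian, and splits an Euler tour into $kT\le k\log n$ $s$-$t$ walks; every cover edge is used once, which gives the stated bound $kL\log n$ with no expansion bookkeeping. Without some version of this step your construction, as written, does not produce paths of $G$ covering all vertices.
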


Next, we study another generalization of the ATSPP, namely the $k$-person asymmetric traveling salesman path problem. In Section \ref{sec:kperson} we prove the following theorem:

\begin{theorem}\label{thm:kperson}
There is an $O(k^2 \log n)$ approximation algorithm for the $k$-person ATSPP. 
Moreover, the integrality gap of its LP relaxation is bounded by the same factor.
\end{theorem}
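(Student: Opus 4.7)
The plan is to use the natural LP relaxation of the $k$-person ATSPP together with Theorem~\ref{thm:kpath}. The LP relaxation is obtained from~(\ref{alphalp}) by setting $\alpha = 1$ and replacing~(\ref{con2}) with $x(\delta^+(s)) = x(\delta^-(t)) = k$, since any integer solution is a union of $k$ $s$-$t$ flows whose supports together cover every vertex. Let $L$ denote its optimum value; then $L$ is a valid lower bound on the cost of any integer $k$-person ATSPP solution.

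Given a feasible LP solution $x$ of cost $L$, the scaled vector $y := x/k$ satisfies $y(\delta^+(s)) = y(\delta^-(t)) = 1$, preserves flow conservation, and has $y(\delta^-(S)) \ge 1/k$ for every $S \subset V$ with $s \notin S$. Hence $y$ is feasible for LP($\alpha = 1/k$) with cost $L/k$. Applying Theorem~\ref{thm:kpath} to $y$ (with the role of $k$ in the theorem played by our $k$) yields a collection $\mathcal{P}$ of at most $k \log n$ $s$-$t$ paths that covers every vertex of $V$, with total cost at most $k \cdot (L/k) \cdot \log n = L \log n$.

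To consolidate $\mathcal{P}$ into exactly $k$ $s$-$t$ paths, I would partition $\mathcal{P}$ arbitrarily into $k$ groups of at most $\log n$ paths each (padding with shortest $s$-$t$ paths of cost at most $L/k$ if $|\mathcal{P}| < k$). Within each group of size $m$, the multi-set union of the group's edges is a multigraph with an out-excess of $m$ at $s$, an in-excess of $m$ at $t$, and balance elsewhere. Augmenting with $m-1$ copies of a shortest $t$-to-$s$ path in $G$ makes the multigraph Eulerian modulo a single $s$-$t$ excess, so it admits an $s$-$t$ Eulerian trail; shortcutting this trail via the triangle inequality yields a simple $s$-$t$ path visiting every vertex appearing in the group's paths.

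The main obstacle is bounding $\mathrm{dist}(t,s)$, the shortest $t$-to-$s$ distance in $G$, in terms of $L$. With the merging scheme above, the total cost of the $k$ output paths is at most $L \log n + k(\log n - 1)\cdot \mathrm{dist}(t,s)$, so a bound of the form $\mathrm{dist}(t,s) = O(kL)$ delivers the claimed $O(k^2 \log n)$ approximation ratio and integrality gap. I expect this to be the technically delicate step, likely proved by a structural LP argument---for instance, exhibiting a feasible circulation that combines the $k$-person LP solution with a cheap $t$-to-$s$ return trip, or modifying the LP to explicitly enforce such a bound without increasing the optimum---and working out this step is the main challenge I would need to complete the proof.
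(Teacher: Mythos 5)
Your first two steps are sound: the $k$-person LP you write down is the same relaxation used in the paper, and scaling a feasible solution by $1/k$ does give a feasible solution to LP($\alpha=1/k$) of cost $L/k$, so Theorem~\ref{thm:kpath} indeed yields at most $k\log n$ covering $s$-$t$ paths of total cost at most $L\log n$. The gap is in the consolidation step, and it is not merely ``technically delicate'': the bound $\mathrm{dist}(t,s)=O(kL)$ that your merging scheme needs is simply false. The LP forces $x(\delta^-(s))=x(\delta^+(t))=0$, so it contains no $t$-to-$s$ flow and places no constraint whatsoever on $d_{ts}$; in an asymmetric metric one can make every arc entering $s$ (or leaving $t$) have length $D$ arbitrarily large while the optimal $k$ paths, and hence $L$, stay bounded, so your padded Eulerian-trail merging costs $\Omega(D)$. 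Nor is this an artifact of using explicit return arcs: Theorem~\ref{thm:kpath} gives no control over how its $k\log n$ paths overlap, and merging vertex-disjoint cheap $s$-$t$ paths into fewer $s$-$t$ paths can be forced to be arbitrarily expensive --- this is exactly the phenomenon behind the paper's bad-gap example for LP($\alpha=\nicefrac{1}{2}$) in Figure~\ref{fig:badgap}.

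The paper's proof never creates more paths than it can afford and so never has to merge disjoint paths. It reruns the iterative scheme of Algorithm~\ref{alg:ig}, computing minimum-cost $k$-path-cycle covers (each of cost at most the $k$-person LP value, by the analogue of Lemma~\ref{lem:kpathcyc}) for $T=(k+1)\log n+1$ iterations; by the analogue of Lemma~\ref{lem:amtflow} every surviving node then carries at least $T-\log n$ of the $kT$ units of the acyclic flow $F$, i.e.\ more than a $\tfrac{k}{k+1}$ fraction. Lemma~\ref{lem:kpaths} then uses a comparability-graph (perfect-graph/Dilworth-type) argument: an independent set of size $k+1$ would force more than $kT$ flow paths, so the nodes of $F$ admit a cover by at most $k$ chains, which are converted into $k$ $s$-$t$ paths that use only edges of $F$, each edge at most $k$ times; the circulation components $H$ are spliced in through shared representatives as in the ATSPP algorithm. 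No $t$-to-$s$ edges are needed, and the cost accounting ($T=O(k\log n)$ covers, each edge reused at most $k$ times) yields the $O(k^2\log n)$ bound. To complete your approach you would need a replacement for this chain-cover step --- some argument combining your $k\log n$ paths into $k$ while staying within the support of cheap flows --- and that is precisely what your proposal leaves open.
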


Given these results concerning LP($\alpha$), we study a particular
LP relaxation for the directed latency problem in Section \ref{sec:latency}.
We improve upon the $O(n^{1/2+\epsilon})$-approximation of \cite{nagarajan:ravi:latency} substantially by proving the following:

\begin{theorem} \label{thm:lat}
A solution to the directed latency problem can be found in polynomial time
that has cost no more than $O(\log n)\cdot L$, where $L$ is the value 
of LP relaxation (\ref{lp:lat}), which is also a lower bound on the integer optimum.
\end{theorem}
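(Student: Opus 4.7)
The plan is to use LP (\ref{lp:lat}) to decompose the latency problem into $O(\log n)$ ATSPP-style subproblems, one per geometric scale of latency, and then invoke Theorem~\ref{thm:a-ig} on each. This follows the template of Nagarajan and Ravi \cite{nagarajan:ravi:latency}, except that the stronger rounding we now have lets us avoid explicitly guessing the inter-scale ``anchor'' vertices, which is what forced their $n^\epsilon$ factor.

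From a solution to LP (\ref{lp:lat}) of cost $L$, I would first extract, for each scale $i = 0, 1, \ldots, \lceil \log n \rceil$, a fractional $s$-$t$ flow $x^{(i)}$ of cost $O(2^i)$ that satisfies the cut constraints (\ref{alp-sc}) with some fixed $\alpha > \tfrac{1}{2}$ (say $\alpha = \tfrac{2}{3}$) for every set $S$ of vertices whose LP latency is at most $2^i$. The total fractional cost $\sum_i \mathrm{cost}(x^{(i)})$ is $O(L)$ by a standard geometric-series charging: a vertex $v$ contributing $\ell_v$ to the latency objective is covered at all scales $2^i \ge \ell_v$, a geometric tail summing to $O(\ell_v)$. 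The exact form the decomposition takes depends on LP (\ref{lp:lat}); if it is the natural time-indexed formulation in which each edge has a variable per epoch, each epoch's flow vector \emph{is} the required $x^{(i)}$.

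Next, for each $i$ I would apply Theorem~\ref{thm:a-ig} to $x^{(i)}$, obtaining an integral $s$-$t$ walk $P_i$ of cost $O(2^i \log n)$ that visits every vertex of LP latency at most $2^i$. Concatenating $P_0, P_1, \ldots$ in order yields a single $s$-$t$ walk $W$ in which every vertex $v$ is reached by position at most $\sum_{j \le i(v)} O(2^j \log n) = O(2^{i(v)} \log n) = O(\log n \cdot \ell_v)$, where $i(v)$ is the least index with $v \in P_{i(v)}$. Summing over $v$ gives total latency $O(\log n) \cdot L$. Finally, shortcutting $W$ into a Hamiltonian $s$-$t$ path can only decrease latencies, since each vertex's first occurrence is kept and duplicate visits are removed using the directed triangle inequality.

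The main obstacle is implementing the scale-$i$ extraction in a way that both (a) certifies cut-coverage strictly exceeding $\tfrac{1}{2}$ on the correct vertex set, so that Theorem~\ref{thm:a-ig} applies with a bounded constant, and (b) avoids a $t \to s$ hop when chaining the $P_i$'s. Issue (a) is what dictates how LP (\ref{lp:lat}) must be designed: a naive latency LP might only certify $\alpha$-coverage in an averaged sense across scales, in which case one would have to fall back on Theorem~\ref{thm:kpath} and complicate the concatenation. Issue (b) is handled by having the LP's per-scale subsolution be an $s$-$u_i$ flow, where the anchor $u_i$ is itself fractionally selected by the LP (with $u_{\lceil \log n \rceil} = t$), so the walks chain naturally; alternatively one can keep $s$-$t$ flows and absorb the $t \to s$ jump into an extra constant factor, since the jump costs at most the length of $P_i$ by the triangle inequality.
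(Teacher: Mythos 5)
Your high-level template (geometric latency scales, one ATSPP-style subproblem per scale via the integrality-gap theorems, concatenate, geometric sum) is indeed the shape of the paper's argument, but the two places you flag as ``obstacles'' are exactly where the real work lies, and your proposal does not resolve them. First, the extraction step: LP (\ref{lp:lat}) is not time-indexed; it has pairwise ordering variables $x_{uw}$ and a separate unit flow $f^v$ from $s$ to each $v$, with the amount of $v$-flow through $u$ equal to $x_{uv}$. For two vertices $u,w$ in the same latency bucket you only know $x_{uw}+x_{wu}=1$, so there is no single sink toward which \emph{all} vertices of latency at most $2^i$ have flow exceeding $\tfrac{1}{2}$, and hence no single $x^{(i)}$ of cost $O(2^i)$ to which Theorem~\ref{thm:a-ig} applies covering the whole bucket. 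The paper instead picks a hub $v_i^j$ maximizing $|\{u \in V_i : x_{uv_i^j}\ge \tfrac12\}|$, covers the set $A_i^j$ (those $u$ with $x_{uv_i^j}\ge \tfrac23 + \tfrac{2i-2+j}{24\log n}$) via Theorem~\ref{thm:a-ig} with $\alpha=\tfrac23$, covers the set $B_i^j$ (threshold $\tfrac12$) via Theorem~\ref{thm:kpath} with $\alpha=\tfrac12$ (yielding $2\log n$ paths, the ``fallback'' you mention but do not carry out), and only guarantees that a constant fraction of each bucket is covered per iteration; the leftovers are pushed to the next bucket, and Claim~\ref{cl:cov} plus a geometric-decay computation is needed to keep the total latency at $O(\log n)\cdot L$. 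Your accounting, which assumes every vertex of latency $\le 2^i$ is covered at scale $i$, silently skips all of this.

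Second, the stitching cost. Your fallback ``absorb the $t\to s$ jump into a constant factor, since the jump costs at most the length of $P_i$ by the triangle inequality'' is false in an asymmetric metric: the triangle inequality bounds $d_{st}$ by the length of an $s$-$t$ path, not $d_{ts}$, which can be arbitrarily large. Your alternative (LP-selected anchors $u_i$) is the right idea but is precisely the unproved part: the paper bounds each append edge $d_{uw}$ (from the endpoint $u=v_{i'}^{j'}$ of an earlier path to the first new vertex $w$ of the next one) using the ordering variables --- either $x_{uw}\ge\tfrac16$ forces $\ell(w)\ge \tfrac16 d_{uw}$ via the flow $f^w$ (Lemma~\ref{lem:app-p}), or the staggered thresholds $\tfrac23+\tfrac{2i-2+j}{24\log n}$ in the definition of $A_i^j$ give $x_{uw}+x_{wv_i^j}\ge 1+\tfrac{1}{24\log n}$, and Lemma~\ref{lem:ijk}, which rests on the three-point variables $x_{uwv}$ and constraints (\ref{lk-const}), (\ref{xijk-const}), (\ref{xij-const}), turns this into $d_{uw}\le 24\log n\cdot \ell(v_i^j)$. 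Without an argument of this kind the connection edges are uncontrolled, so as written the proposal has a genuine gap at both the per-scale coverage step and the concatenation step. (You also need the preprocessing of Lemma~\ref{lem:scale} to ensure only $O(\log n)$ scales exist; indexing scales by $\lceil\log n\rceil$ alone does not suffice.)
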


We note that this seems to be the first time that 
a bound is placed on the integrality gap of any LP relaxation for the minimum latency problem, even in the undirected case.

\section{Integrality gap of ATSPP} \label{sec:atspp-ig}

We show that LP relaxation (\ref{alphalp}) of ATSPP with $\alpha=1$ has integrality gap of $O(\log n)$.
Let $x^*$ be its optimal fractional solution, and let $L$ be its cost.
We define a \emph{path-cycle cover} on a subset of vertices $W\subseteq V$ containing $s$ and $t$ to be the union of one $s$-$t$ path and zero or more cycles, such that each $v\in W$ occurs in exactly one of these subgraphs. The cost of a path-cycle cover is the sum of costs of its edges.

Our approach is an extension of the algorithm by Frieze et al.\,\cite{frieze:galbiati}, analyzed by Williamson\,\cite{williamson:msthesis} to bound the integrality gap for ATSP. That algorithm finds a minimum-cost cycle cover on the current set of vertices, chooses an arbitrary representative vertex for each cycle, deletes other vertices of the cycles, and repeats, at the end combining all the cycle covers into a Hamiltonian cycle. As this is repeated at most $\log n$ times, and the cost of each cycle cover is at most the cost of the LP solution, the upper bound of $\log n$ on the integrality gap is obtained.  In our algorithm for ATSPP, the analogue of a cycle cover is a path-cycle cover (also used in \cite{lam:newman}), whose cost is at most the cost of the LP solution (Lemma \ref{lem:pathcyc}). At the end we combine the edges of $O(\log n)$ path-cycle covers to produce a Hamiltonian path. However, the whole procedure is more involved than in the case of ATSP cycle. For example, we don't choose arbitrary representative vertices, but use an amortized analysis to ensure that each vertex only serves as a representative a bounded number of times.

We note that a path-cycle cover of minimum cost can be found by a combinatorial algorithm, using a reduction to minimum-cost perfect matching, as explained in \cite{lam:newman}.  
In the proof of Lemma \ref{lem:pathcyc} below, we make use of the following splitting-off theorem, as also done in \cite{nagarajan:ravi:polylog}, where splitting off edges $yv$ and $vx$ refers to replacing these edges with the edge $yx$ (unless $y=x$, in which case the two edges are just deleted). 

\begin{theorem}[Frank \cite{frank} and Jackson \cite{jackson}] \label{thm:so}
Let $G=(V,E)$ be a Eulerian directed graph and $vx\in E$. There exists an edge $yv\in E$ such that splitting off $yv$ and $vx$ 
does not reduce the directed connectivity from $u$ to $w$ for any $u,w \in V\setminus \{v\}$.
\end{theorem}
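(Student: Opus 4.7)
The statement is the classical splitting-off theorem of Frank and Jackson, and my plan is to prove it by the standard submodular-uncrossing argument adapted to Eulerian digraphs.

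First, I would set up witnesses by contradiction. Assume no in-edge $yv$ admits a safe split with $vx$. Then for each $yv \in E$ there is a pair $u, w \in V \setminus \{v\}$ and a minimum $u$-to-$w$ cut $S_y$ (so $u \in S_y$, $w \notin S_y$, and $d^+(S_y) = \lambda(u, w)$) whose forward crossing count strictly drops after the splitting. A direct case analysis on the positions of $y, v, x$ relative to $S_y$ shows the only way the count can decrease is that $v \in S_y$ while $y, x \notin S_y$; the other possibility, $v \notin S_y$ and $y, x \in S_y$, is converted to this one by passing to the complement $\bar{S_y}$, which by the Eulerian identity $d^+(\bar{S_y}) = d^-(S_y) = d^+(S_y)$ remains tight (for the swapped pair). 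Call such a set dangerous.

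Next, I would uncross these dangerous sets. For two dangerous sets $S, S'$, both containing $v$ and avoiding $x$, submodularity of $d^+$ yields
\[
d^+(S) + d^+(S') \geq d^+(S \cap S') + d^+(S \cup S'),
\]
and a short check against the relevant connectivity lower bounds forces both right-hand terms to be tight as well, so the family of dangerous sets is closed under intersection. Taking an inclusion-minimal dangerous set $S^*$, minimality together with this closure forces every in-neighbour $y$ of $v$ to satisfy $y \notin S^*$: otherwise uncrossing $S^*$ with the dangerous witness $S_y$ would yield a strictly smaller dangerous set inside $S^*$.

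Finally, I would count to reach a contradiction. Because no in-edge of $v$ has its tail inside $S^*$, removing $v$ from $S^*$ loses only the edge $vx$ from $\delta^+(S^*)$ and gains nothing, so
\[
d^+(S^* \setminus \{v\}) = d^+(S^*) - 1 = \lambda(u, w) - 1.
\]
But $S^* \setminus \{v\}$ still separates $u$ from $w$ (since $u \neq v$), contradicting the definition of $\lambda(u, w)$. The main obstacle throughout is the orientation bookkeeping in the case analysis and uncrossing; the Eulerian condition $d^+ = d^-$ on every set is precisely what lets both orientations of dangerous cuts be treated symmetrically, which is what allows the directed argument to mirror the cleaner undirected uncrossing proof.
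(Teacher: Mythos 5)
The paper does not actually prove Theorem~\ref{thm:so}: it is quoted as a black-box result of Frank and Jackson, so there is no in-paper proof to compare against, and your attempt has to stand on its own. Within your attempt, the setup and the endgame are sound: the case analysis showing that splitting off $yv,vx$ decreases $d^+(S)$ only when $v\in S,\ y,x\notin S$ or in the complementary configuration, the use of $d^+(\bar S)=d^-(S)=d^+(S)$ and $\lambda(u,w)=\lambda(w,u)$ in Eulerian digraphs to reduce to the first type, and the final count (if no in-edge of $v$ has its tail in $S^*$, then $d^+(S^*\setminus\{v\})\le d^+(S^*)-1$ while $S^*\setminus\{v\}$ still separates the witness pair) are all correct.

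The genuine gap is the uncrossing step, which is exactly where the difficulty of this theorem lives. Because the theorem preserves \emph{local} connectivities, a dangerous set $S$ is tight only for its own witness pair $(u,w)$, and to conclude that $S\cap S'$ and $S\cup S'$ are tight you must exhibit new witness pairs whose connectivities sum to at least $\lambda(u,w)+\lambda(u',w')$. When the witnesses cross --- say $u\in S\setminus S'$, $w\in S'\setminus S$, $u'\in S'\setminus S$, $w'\in S\setminus S'$ --- neither $S\cap S'$ nor $S\cup S'$ separates either pair, so submodularity forces no tightness at all; the standard escape, the posimodular inequality $d^+(S)+d^+(S')\ge d^+(S\setminus S')+d^+(S'\setminus S)$ (valid here precisely because the graph is Eulerian), yields tight sets that do \emph{not} contain $v$ and hence are useless for your minimal-dangerous-set scheme. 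So ``the family of dangerous sets is closed under intersection'' is not a short check; as stated it is unproven and, in the naive form you use, false in general. There is also the unaddressed degenerate case $S^*\cap S_y=\{v\}$, which admits no witness pair and therefore cannot contradict minimality. The proofs of Frank and Jackson (and textbook treatments) consist essentially of this crossing-witness case analysis, or of a reduction via the Eulerian structure to Lov\'asz-type undirected splitting; that machinery cannot be replaced by a single submodularity application.
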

This theorem also applies to weighted Eulerian graphs, i.e.\ ones in which the weighted out-degree of every vertex is equal to its weighted in-degree, since weighted edges can be replaced by multiple parallel edges, producing an unweighted Eulerian multigraph.

\begin{lemma} \label{lem:pathcyc}
For any subset $W\subseteq V$ that includes $s$ and $t$, there is 
a path-cycle cover of $W$ of cost at most $L$.
\end{lemma}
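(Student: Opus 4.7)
The plan is to construct a feasible fractional solution $\hat x$, of cost at most $L$, to the LP relaxation of minimum-cost path-cycle cover on $W$, and then appeal to the fact that this LP has integral extreme points: by the reduction of \cite{lam:newman} it is a bipartite perfect-matching LP between $W \setminus \{t\}$ and $W \setminus \{s\}$. Concretely, $\hat x$ must satisfy $\hat x(\delta^+(u)) = \hat x(\delta^-(u)) = 1$ for every $u \in W \setminus \{s,t\}$, $\hat x(\delta^+(s)) = \hat x(\delta^-(t)) = 1$, $\hat x(\delta^-(s)) = \hat x(\delta^+(t)) = 0$, and $\hat x \geq 0$.

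I build $\hat x$ from $x^*$ by splitting-off. First augment $x^*$ with an auxiliary arc $(t,s)$ of weight $1$ to obtain a weighted Eulerian digraph $\tilde x$ on $V$. For each $v \notin W$ in turn, iterate the following: while $v$ has any positive-weight out-arc $vx$, invoke Theorem \ref{thm:so} to produce an in-arc $yv$ such that transferring $\epsilon = \min(\tilde x_{yv}, \tilde x_{vx})$ from $\tilde x_{yv}$ and $\tilde x_{vx}$ onto $\tilde x_{yx}$ preserves directed connectivity between every pair in $V \setminus \{v\}$. The triangle inequality $d_{yx} \leq d_{yv} + d_{vx}$ ensures the total weighted cost does not rise, and the iteration drives $v$'s weighted degree to $0$. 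Since splitting at a vertex leaves the in- and out-degrees of every other vertex unchanged, once all $v \notin W$ are exhausted we have a weighted Eulerian digraph on $W$ whose degree at each $u \in W$ matches its degree in $\tilde x$.

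Next, for each $u \in W \setminus \{s,t\}$ whose weighted degree exceeds $1$, apply the same splitting-off at $u$, choosing split sizes so as to reduce its degree to exactly $1$; then remove the auxiliary arc $(t,s)$. The resulting $\hat x$ satisfies all required equalities and has cost at most $(L + d_{ts}) - d_{ts} = L$, so the integrality of the matching LP yields the desired path-cycle cover. The step I expect to require the most care is maintaining the invariant that throughout the process the only arc into $s$ and the only arc out of $t$ is the auxiliary $(t,s)$ itself, without which $\hat x$ could fail $\hat x(\delta^-(s)) = \hat x(\delta^+(t)) = 0$. This holds inductively: $x^*$ has no such arcs by (\ref{con3}), and a newly created split arc $yx$ has $y$ equal to the tail of a currently present arc into the split vertex and $x$ equal to the head of a currently present out-arc from it, so $y = t$ or $x = s$ would contradict the invariant one step earlier.
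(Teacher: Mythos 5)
Your overall plan (add a dummy $ts$ arc, split off the vertices of $V\setminus W$, then invoke LP integrality) is close in spirit to the paper's, but there is a genuine gap in the degree bookkeeping. The splitting-off operation has a deletion case: when the chosen in-arc and out-arc at the split vertex form a $2$-cycle ($y=x$), the two arcs are simply removed, and then the in- and out-degree of $y$ \emph{do} change -- they each drop by the split amount. So your assertion that ``splitting at a vertex leaves the in- and out-degrees of every other vertex unchanged'' is false, and with it the claim that after eliminating $V\setminus W$ every $u\in W$ retains its original degree. Worse, in your second phase, where you split off at vertices of $W$ itself to force degree exactly $1$, such a deletion could lower the degree of another $W$-vertex -- possibly one you have already fixed at exactly $1$, or one whose slack is smaller than the split amount -- below $1$, and this is unrecoverable since splitting only decreases degrees. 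What actually protects you is the connectivity-preservation guarantee of Theorem \ref{thm:so}, which you invoke but never use: since the $s$-to-$w$ connectivity is preserved for every $w$ other than the split vertex, and the in-degree of $w$ is an upper bound on that flow value (the singleton cut), every untouched vertex keeps in-degree at least $1$ throughout, and a deletion that would push an exactly-$1$ vertex below $1$ is automatically excluded because it would violate the preserved connectivity. This argument -- essentially the paper's observation that the split solution still satisfies constraints (\ref{alp-sc}) on $W$ -- is the missing ingredient; without it your $\hat x$ is not shown to be feasible for the exact-degree matching LP.

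It is worth noting how the paper sidesteps the issue entirely: it stops after splitting off $V\setminus W$, keeps only the lower-bound degree constraints (which are exactly what connectivity preservation delivers), observes that LP (\ref{intlp}) with these $\geq 1$ constraints is a circulation problem and hence has an integral optimum, and only then repairs degrees exceeding $1$ by shortcutting an Euler tour of the integral solution. Your second phase of fractional splitting inside $W$, followed by Birkhoff-type integrality of the bipartite matching formulation, is a legitimate alternative route to the same conclusion, but only once the connectivity-based degree lower bound is made explicit at every split.
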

\begin{proof}
Consider the graph $G'$ obtained from $G$ by assigning capacities $x^*_e$ to edges $e\in E$ and adding a dummy edge from $t$ to $s$ with unit capacity. From constraints (\ref{flcon})-(\ref{con3}) of LP($\alpha=1$) it follows that this is a weighted Eulerian graph. Constraints (\ref{alp-sc}) and the max-flow min-cut theorem imply that for any $v\in V$, the directed connectivity from $s$ to $v$ in $G'$ is at least $\alpha=1$.

We apply the splitting-off operation on $G'$, as guaranteed by Theorem \ref{thm:so}, to vertices in $V\setminus W$ until all of them are disconnected from the rest of the graph. Let $G''$ be the resulting graph on $W$ and let $x'$ be its edge capacities except for the dummy edge $ts$ (which was unaffected by the splitting-off process).
By Theorem \ref{thm:so}, the directed connectivity from $s$ to any $v\in W$ does not decrease from the splitting-off operations, which means that in $G''$ it is still at least $\alpha$. This ensures that $x'$ satisfies constraints (\ref{alp-sc}) for all sets $S\subset W$ with $S\neq \emptyset$ and 
$s\notin S$, and is a feasible solution to LP($\alpha$) on the subset $W$ of vertices. Furthermore, the triangle inequality implies that the cost of $x'$ is no more than that of $x^*$, namely $L$.

Now we make the observation that if we remove from LP($\alpha=1$) constraints (\ref{alp-sc}) for all but singleton sets, the resulting LP is equivalent to a circulation problem, and thus has an integer 
optimal solution. Since there is a feasible solution to LP($\alpha=1$) on the set $W$ of cost at most $L$ (namely $x'$), and removing a 
constraint can only decrease the optimal objective value, it means that there is an integer solution to the following program that costs no more than $L$:
\begin{align}
\mbox{min}~~ & \sum_{e\in E} d_e x_e \label{intlp} \\
\mbox{s.t.}~~ & x(\delta^+(u))=x(\delta^-(u)) \geq 1 & \forall u\in W \setminus \{s,t\} \notag \\
& x(\delta^+(s))=x(\delta^-(t))=1 \label{amtfl} \\
& x(\delta^-(s))=x(\delta^+(t))=0  \notag \\
& x_e \geq 0 & \forall e\in E  \notag 
\end{align}
In principle, this integer solution can have $x(\delta^+(u))>1$ 
for some nodes $u$. In this case, we find a Euler tour of each component of the resulting graph (with dummy edge $ts$ added in) and shortcut it over any repeated vertices. This ensures that $x(\delta^+(u))=1$ for all $u$ without increasing the cost. 
But such a solution is precisely a path-cycle cover of $W$.
\end{proof}

\medskip

\begin{algorithm*}[ht]
  \caption{~Asymmetric Traveling Salesman Path} \label{alg:ig} 
\begin{algorithmic}[1] 
\State Let a set $W \leftarrow V$; integer labels $l_v \leftarrow 0$ for all $v\in V$; flow $F \leftarrow \emptyset$ and circulation $H \leftarrow \emptyset$  \label{line:one}
\For {$2 \log_2 n +1$ iterations} \label{line:iter}
\State Find the minimum-cost $s$-$t$ path-cycle cover $F'$ on $W$ 
\label{line:findflow}
\State $F\leftarrow F+F'$ \label{line:addf}
\Comment $F$ is acyclic before this operation
\State Find a path-cycle decomposition of $F$, with cycles $C_1...C_k$ and paths $P_1...P_h$, such that $\bigcup_i P_i$ is acyclic
\For {each connected component $A$ of $\bigcup_j C_j$} \label{line:finda}
\Comment $A$ is a circulation
\State For each vertex $u\in A$, let $d_u$ be the in-degree of $u$ in $A$
\State Find a ``representative'' node $v\in A$ minimizing $l_v+d_v$ \label{line:findv}
\State $F \leftarrow F - A$
\Comment subtract flows
\State {\bf for} each $w\in A$, $w\neq v$, and for each path $P_i$ 
\State ~~~~~{\bf if} $w\in P_i$ {\bf then} modify $F$ by shortcutting $P_i$ over $w$
\State Remove all nodes in $A$, except $v$, from $W$ \label{line:delw}
\Comment Note: they don't participate in $F$ anymore
\State $H\leftarrow H+A$
\Comment add circulations
\State $l_v \leftarrow l_v+d_v$
\EndFor
\EndFor \label{line:endfor}
\State Let $P$ be an $s$-$t$ path consisting of nodes in $W$ in the order found by  	topologically sorting $F$ \label{line:findp} \newline \mbox{}
\Comment $F$ is an acyclic flow on the nodes $W$
\For {every connected component $X$ of $H$ of size $|X|>1$} \label{line:cyc}
\State Find a Euler tour of $X$, shortcut over nodes that appear more than once \label{line:euler}
\State Incorporate the resulting cycle into $P$ using a shared node \label{line:inc}
\EndFor \label{line:endcyc}
\State {\bf return} $P$
\Comment $P$ is a Hamiltonian $s$-$t$ path
\end{algorithmic}
\end{algorithm*}

We consider Algorithm~\ref{alg:ig}. Roughly speaking, the idea is to find a path-cycle cover, select a representative node for each cycle, delete the other cycle nodes, and repeat. Actually, a representative is selected for a component more general than a simple cycle, namely a union of one or more cycles. We ensure that each vertex is selected as a representative at most $\log n$ times, which means that after $2 \log n +1$ iterations, each surviving vertex has participated in the acyclic part of the path-cycle covers at least $\log n +1$ times. This allows us at the end to find an $s$-$t$ path which spans all the surviving vertices, $W$, and consists entirely of edges in the acyclic part, $F$, of the union of all the path-cycle covers, using a technique of \cite{nagarajan:ravi:latency}. Then we insert into it the subpaths obtained from the cyclic part, $H$, of the union of path-cycle covers, connected through their representative vertices. 
We occasionally treat subgraphs satisfying appropriate degree constraints as flows or circulations.

\begin{lemma} \label{lem:label}
During the course of the algorithm, no label $l_v$ exceeds the value $\log n$.
\end{lemma}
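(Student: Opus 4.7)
The plan is to use a potential function $\Phi := \sum_{v \in W} 2^{l_v}$ and show that it never exceeds its initial value of $n$. Since $2^{l_v}\le \Phi$ for any $v \in W$, this suffices to conclude $l_v \le \log n$ for every vertex $v$ throughout the algorithm. At initialization (line~\ref{line:one}) all labels are zero and $|W|=n$, giving $\Phi=n$, and the only block of code that ever updates $W$ or any label is the inner \textbf{for} loop (starting at line~\ref{line:finda}) that processes one component~$A$ at a time.

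So consider one such iteration. Only the representative $v$ has its label changed, from $l_v$ to $l_v+d_v$, and every $u \in A\setminus\{v\}$ is removed from $W$. The change in potential is
\[
\Delta\Phi \;=\; 2^{l_v+d_v} \;-\; \sum_{u \in A} 2^{l_u}.
\]
Because $v$ is chosen to minimize $l_u+d_u$ over $u \in A$, we have $l_u \ge l_v+d_v-d_u$ and therefore $2^{l_u}\ge 2^{l_v+d_v}\cdot 2^{-d_u}$ for every $u\in A$. Summing yields $\sum_{u\in A}2^{l_u}\ge 2^{l_v+d_v}\sum_{u\in A}2^{-d_u}$, and so
\[
\Delta\Phi \;\le\; 2^{l_v+d_v}\Bigl(1-\sum_{u\in A} 2^{-d_u}\Bigr).
\]
Thus $\Delta\Phi \le 0$ will follow at once from the combinatorial inequality
\[
\sum_{u\in A} 2^{-d_u}\;\ge\;1.\qquad(\ast)
\]

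The hard part is $(\ast)$. Here $A$ is a connected component of $\bigcup_j C_j$, where the $C_j$ are the simple cycles of a path--cycle decomposition of $F$, so $d_u$ equals the number of cycles $C_j$ passing through $u$, each $C_j$ has length $\ge 2$, and $|A|\ge 2$. I plan to prove $(\ast)$ by induction on the number of cycles composing $A$: the base case of a single cycle on $a\ge 2$ vertices gives $a\cdot 2^{-1}\ge 1$, and in the inductive step one attaches a new simple cycle, balancing the $2^{-1}$ contribution of each genuinely new vertex against the halving of $2^{-d_u}$ at each already-present shared vertex. The essential structural input I would exploit here is the invariant, maintained at line~\ref{line:addf}, that $F$ is acyclic at the start of every outer iteration. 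This forces every simple cycle appearing in the decomposition to use at least one edge supplied by the freshly added path--cycle cover $F'$; since $F'$ is a set of edges, each of multiplicity one, no simple cycle can appear with multiplicity greater than one among $C_1,\dots,C_k$. This rules out the ``doubled small cycle'' configurations that would otherwise violate $(\ast)$ and lets the induction carry through, after which $\Delta\Phi\le 0$ in every step and $\Phi\le n$ throughout, completing the proof.
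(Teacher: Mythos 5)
Your potential-function reduction is sound, and in fact it is just a repackaging of the paper's argument: the paper maintains a forest in which the subtree of a root $v$ has at least $2^{l_v}$ leaves, and summing over the (disjoint) subtrees of the roots $W$ gives exactly your invariant $\sum_{v\in W}2^{l_v}\le n$. The step where your proposal breaks down is the combinatorial inequality $(\ast)$: $\sum_{u\in A}2^{-d_u}\ge 1$. This inequality is \emph{not} a consequence of the structural facts you propose to use, namely that $A$ is a connected union of distinct simple cycles, each of length at least $2$ and each appearing with multiplicity one. Counterexample: take three vertices $a,b,c$ and the three $2$-cycles on $\{a,b\}$, $\{b,c\}$, $\{a,c\}$. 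These are distinct simple cycles, none repeated, their union is connected, and every vertex has $d_u=2$, so $\sum_u 2^{-d_u}=3/4<1$. Concretely, your inductive step fails whenever the newly attached cycle consists entirely of already-present vertices: it halves the contribution of each shared vertex and adds nothing, so the sum strictly decreases, and multiplicity-one of simple cycles does nothing to forbid this.

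What actually rules out such configurations is a stronger fact, which is the paper's Claim 2.4: in every component $A$ there are two \emph{distinct} vertices $x,y$ with $d_x=d_y=1$. Its proof uses more than acyclicity of $\bar F$ and multiplicity one of cycles; it uses that if $x$ (resp.\ $y$) is the first (resp.\ last) vertex of $A$ in a topological order of the acyclic flow $\bar F$, then every unit of flow entering $x$ (resp.\ leaving $y$) along edges of $A$ must come from the fresh path-cycle cover $F'$, which routes \emph{exactly one} unit through each vertex; hence $d_x\le 1$ and, by Eulerian degree balance in $A$, $d_y\le 1$ as well. With this claim, $(\ast)$ is immediate ($2\cdot 2^{-1}=1$, since all other terms are nonnegative), and your $\Delta\Phi\le 0$ computation then completes the lemma. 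Equivalently, the paper bypasses $(\ast)$ altogether: since $v$ minimizes $l_u+d_u$ over $A$, both $x$ and $y$ satisfy $l_x,l_y\ge l_v+d_v-1$, so merging their (disjoint) leaf sets already yields $2\cdot 2^{l_v+d_v-1}=2^{l_v+d_v}$ leaves under $v$. So the gap in your write-up is precisely the missing degree-one claim; as stated, the induction you sketch for $(\ast)$ cannot carry through.
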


The idea of the proof is, as the algorithm proceeds, to maintain a forest on the set of nodes $V$, such that the number of leaves in a subtree rooted at any node $v\in V$ is at least $2^{l_v}$. The lemma then follows because the total number of leaves is at most $n$. 
We first prove an auxiliary claim.

\begin{claim} \label{claim:xy}
In each component $A$ found by  Algorithm \ref{alg:ig} on line \ref{line:finda}, there are two distinct nodes $x$ and $y$ such that $d_x = d_y = 1$.
\end{claim}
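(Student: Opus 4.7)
The plan is to exploit the acyclicity of $F^{(i-1)}$ (the value of $F$ at the start of iteration $i$, before line~\ref{line:addf}) together with the fact that in the new path-cycle cover $F'$ added on line~\ref{line:addf}, every vertex of the current $W$ other than $s$ has in-degree exactly $1$ and every vertex other than $t$ has out-degree exactly $1$. After line~\ref{line:addf} we have $F = F^{(i-1)} + F'$. Since the component $A$ is a connected subgraph of $\bigcup_j C_j$, every vertex of $A$ has equal in- and out-degree in $A$; hence $s,t \notin V(A)$, and because $G$ has no self-loops and $A$ has at least one edge, $|V(A)| \geq 2$.

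First I would fix a topological order $\pi$ on $V(F^{(i-1)})$; this exists because $F^{(i-1)}$ is acyclic. If $V(A) \cap V(F^{(i-1)}) = \emptyset$, then every $v \in V(A)$ has in-degree $0$ in $F^{(i-1)}$ and in-degree $1$ in $F'$, so $d_v = 1$ for all $v \in V(A)$, and any two vertices of $V(A)$ serve as $x$ and $y$. Otherwise, let $v^{\star}$ and $v^{\star\star}$ be the $\pi$-earliest and $\pi$-latest vertices of $V(A) \cap V(F^{(i-1)})$, respectively.

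Next I would argue $d_{v^{\star}} = 1$. Any in-edge of $v^{\star}$ in $A$ has its tail $u$ in $V(A)$, since $A$ is a connected component of $\bigcup_j C_j$. I claim such an edge cannot lie in $F^{(i-1)}$: if $u \in V(F^{(i-1)})$ then $u \in V(A) \cap V(F^{(i-1)})$, so $\pi(u) \geq \pi(v^{\star})$ by the choice of $v^{\star}$, yet an $F^{(i-1)}$-edge from $u$ to $v^{\star}$ would force $\pi(u) < \pi(v^{\star})$; and if instead $u \notin V(F^{(i-1)})$ then $u$ has no incident edges in $F^{(i-1)}$ at all. Hence every in-edge of $v^{\star}$ in $A$ belongs to $F'$, and since $v^{\star} \neq s$ and $v^{\star}$ lies in the current $W$, its in-degree in $F'$ is exactly $1$. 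Together with $d_{v^{\star}} \geq 1$ this gives $d_{v^{\star}} = 1$. A symmetric argument on out-edges, using $\pi$-maximality of $v^{\star\star}$ and the fact that $F'$ has out-degree $1$ at every non-$t$ vertex of $W$, yields $d_{v^{\star\star}} = 1$.

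Finally I would confirm that two distinct witnesses exist. If $|V(A) \cap V(F^{(i-1)})| \geq 2$ then $v^{\star} \neq v^{\star\star}$ and we are done. Otherwise $V(A) \cap V(F^{(i-1)}) = \{v^{\star}\}$, and since $|V(A)| \geq 2$ there is some $u \in V(A) \setminus V(F^{(i-1)})$; as in the empty-intersection case, such a $u$ has in-degree $0$ in $F^{(i-1)}$ and $1$ in $F'$, so $d_u = 1$, providing the second witness distinct from $v^{\star}$. The step I expect to require the most care is the dichotomy ruling out $F^{(i-1)}$-edges into $v^{\star}$ within $A$, which crucially uses both the acyclicity of $F^{(i-1)}$ and the $\pi$-minimality defining $v^{\star}$.
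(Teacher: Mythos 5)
Your proof is correct and follows essentially the same route as the paper: order vertices topologically with respect to the acyclic flow accumulated before line~\ref{line:addf}, take the extremal vertices of $A$ in that order, observe that their incoming (resp.\ outgoing) edges inside $A$ can only come from $F'$ and hence number at most one, and use that $A$ is a union of cycles to convert the out-degree bound into an in-degree bound. The only difference is your explicit case analysis for vertices of $A$ outside the support of the old flow, which the paper handles implicitly by topologically ordering all vertices of $W$.
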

\begin{proof}
Let $\bar{F}$ be the value of $F$ at the start of the current iteration of the outside loop, 
i.e. before $F'$ is added to it on line \ref{line:addf}. $\bar{F}$ is acyclic, because during 
the course of the loop, all cycles of $F$ are subtracted from it. So $A$ is a union of cycles, 
formed from the sum of an acyclic flow $\bar{F}$ and a path-cycle cover $F'$, which sends exactly 
one unit of flow through each vertex. 

Consider a topological ordering of nodes based on the flow $\bar{F}$, and let $x$ and $y$ be 
the first and last nodes of $A$, respectively, in this ordering. As $A$ always contains at least 
two nodes, $x$ and $y$ are distinct. Since $x$ and $y$ participate in some cycle(s) in $A$, their 
in-degrees are at least 1. We now claim that the in-degree of $x$ in $A$ is at most 1.  Indeed, 
since all other nodes of $A$ are later than $x$ in the topological ordering, it cannot have any 
flow coming from them in $\bar{F}$. So the only incoming flow to $x$ can be in $F'$.  But since 
$F'$ sends a flow of exactly one unit through each vertex, the in-degree of $x$ in $A$ is at most 
one.  A symmetrical argument can be made for $y$, showing that its out-degree in $A$ is at most 
one.  But since $A$ is a union of cycles, every node's in-degree is equal to its out-degree, and 
the in-degree of $y$ is also at most 1.
\end{proof}

\begin{proofof}{Lemma \ref{lem:label}}
As the algorithm proceeds, let us construct a forest on the set of nodes $V$. Initially, each 
node is the root of its own tree. We maintain the invariant that $W$ is the set of tree roots 
in this forest. For each component $A$ that the algorithm considers, and the node $v$ found 
on line \ref{line:findv}, we attach the nodes of $A$, except $v$, as children of $v$. Note 
that the invariant is maintained, as these nodes are removed from $W$ on line \ref{line:delw}. 
The set of nodes of each component $A$ found on line \ref{line:finda} is always 
a subset of $W$, and thus our construction indeed produces a forest. 

We show by induction on the steps of the algorithm that if a node has label $l$, then its 
subtree contains at least $2^l$ leaves. Thus, since there are $n$ nodes total, no label can 
exceed $\log_2 n$. At the beginning of the algorithm, all labels are 0, and all trees have 
one leaf each, so the base case holds. Now consider some iteration in which the label of 
vertex $v\in A$ is increased from $l_v$ to $l_v + d_v$. By Claim  \ref{claim:xy}, there are 
nodes $x,y\in A$ (possibly one of them equal to $v$) with $d_x=d_y=1$. Since $v$ minimizes 
$l_u+d_u$ among all vertices  $u\in A$, we have that $l_x + d_x \geq l_v+d_v$ and 
$l_y + d_y \geq l_v+d_v$, and thus $l_x \geq l_v+d_v -1$ and $l_y \geq l_v+d_v -1$. 
Thus, by the induction hypothesis, the trees rooted at $x$ and $y$ each have at least 
$2^{l_v+d_v-1}$ leaves. Because we update the forest in such a way that $v$'s new tree 
contains all the leaves of trees previously rooted at $x$ and $y$, this tree now has at 
least $2\cdot 2^{l_v+d_v-1} = 2^{l_v+d_v}$ leaves. 
\end{proofof}

\begin{lemma} \label{lem:amtflow}
At the end of the algorithm's main loop, the flow in $F$ passing through any node $v\in W$ 
is equal to $2\log n + 1 - l_v$, and thus (by Lemma \ref{lem:label}) is at least $\log n +1$.
\end{lemma}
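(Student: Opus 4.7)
The plan is to prove a per-iteration invariant by induction on the loop counter $j$: after the $j$-th pass of the outer loop completes, every vertex $v$ still in $W$ has flow $\phi(v) = j - l_v$ passing through it in the current $F$. Specializing to $j = 2\log n + 1$ yields the equality in the statement, and combining it with $l_v \le \log n$ from Lemma~\ref{lem:label} produces the claimed lower bound of $\log n + 1$.

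The base case $j = 0$ is immediate, since at line~\ref{line:one} we have $F = \emptyset$, $l_v = 0$, and $W = V$. For the inductive step, I would split the $(j+1)$-st iteration into two stages. The first stage is the update $F \leftarrow F + F'$ on line~\ref{line:addf}: because $F'$ is an integral $s$-$t$ path-cycle cover on the current $W$ (guaranteed by Lemma~\ref{lem:pathcyc}), it routes exactly one unit of flow through each $v \in W$, so $\phi(v)$ increases by one, giving the value $(j+1) - l_v$ just before the inner \textbf{for}-loop begins. The second stage is that inner loop, which processes the connected components $A$ of the cyclic part of $F$. These components are vertex-disjoint, so each $v \in W$ is touched by at most one of them. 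If $v$ lies in no $A$, nothing further changes at $v$ and the invariant persists. If $v$ is chosen as the representative of some $A$, then $F \leftarrow F - A$ decreases $\phi(v)$ by exactly $d_v$, while $l_v$ is simultaneously raised by the same $d_v$; the two shifts cancel, and the invariant is preserved. Non-representative vertices of $A$ are deleted from $W$ on line~\ref{line:delw} and need not be tracked further.

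The one step that deserves genuine care is the shortcutting operation: for a non-representative $w \in A$ lying on some path $P_i$ in the path-cycle decomposition of $F$, the subpath $u \to w \to w'$ of $P_i$ is replaced by the single edge $u \to w'$. I would verify that this alters the flow only at $w$ itself, and $w$ is about to leave $W$, so no vertex still in $W$ has its bookkeeping disturbed; in particular the representative $v$ is never shortcut over, so the representative-case accounting above stands unchanged. With that verification in hand, the induction closes and the stated equality holds for every $v \in W$ at the end of the main loop.
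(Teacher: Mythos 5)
Your proof is correct and follows essentially the same accounting as the paper: each iteration's path-cycle cover adds one unit of flow through every surviving vertex, and the only flow ever subtracted at a vertex that remains in $W$ is the $d_v$ removed when it serves as a representative, which is exactly the amount added to $l_v$. You merely package this as an explicit per-iteration invariant (and spell out that shortcutting only disturbs the flow at the deleted vertex, a detail the paper leaves implicit), which is a fine, slightly more careful presentation of the same argument.
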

\begin{proof}
There are $2\log n + 1$ iterations, each of which adds one unit of flow through each vertex $v\in W$. 
We now claim that for a vertex $v\in W$, the amount of flow removed from it is equal to its label, 
$l_v$. Flow is removed from $v$ only if $v$ becomes part of some component $A$. Now, if it is ever 
part of $A$, but not chosen as a representative on line $\ref{line:findv}$, then it is removed 
from $W$. Thus, we are only concerned about vertices that are chosen as representatives every 
time that  they are  part of $A$. Such a vertex has flow $d_v$ going through it in $A$, which 
is the amount subtracted from $F$. But since this is also the amount by which its label increases, 
the lemma follows.
\end{proof}

\medskip

We now show that Algorithm \ref{alg:ig} returns a Hamiltonian $s$-$t$ path of cost at most $(2\log_2 n +1)\cdot L$.

\begin{proofof}{Theorem \ref{thm:ig}}
At the end of the main loop, all nodes of $V$ are part of either $W$ or $H$ or both. So when 
all components of $H$ are incorporated into the path $P$, all nodes of $V$ become part of the path. 
We bound the cost of all the edges used in the final path by the total cost of all the path-cycle covers found on 
line \ref{line:findflow} of the algorithm. We note that at the end of the algorithm, the cost of 
the flow $F+H$ is no more than this total.

We claim that when the $s$-$t$ path $P$ is found on line \ref{line:findp},  $F$ contains flow on 
every edge between consecutive nodes of $P$. 
This is similar to an argument used in \cite{nagarajan:ravi:latency}.
First, since $F$ is acyclic, it has a topological ordering. Suppose we find a flow decomposition of $F$ into paths. There are at most $2\log n +1$ 
such paths, and, by Lemma \ref{lem:amtflow}, each vertex of $W$ participates in at least 
$\log n +1$, or more than half, of them.  This means that any two vertices $u,v\in W$ must 
share a path, say $P'$, in this decomposition. In particular, suppose that $v$ immediately 
follows $u$ in the path $P$. This means that $v$ appears later than $u$ in the topological 
order, so on $P'$ $v$ comes after $u$. Moreover, we claim that on $P'$, $v$ will be the immediate 
successor of $u$. If not, suppose that there is a node $w$ that appears between $u$ and $v$ in $P'$. 
But this means that in the topological ordering (and thus in $P$), $w$ will appear after $u$ and 
before $v$, which contradicts the fact that they are consecutive in $P$. So we conclude that there 
is an edge with flow in $F$ between any two consecutive nodes of $P$, and thus the path $P$ costs 
no more than the flow $F$. 

Regarding $H$, we note that it is a sum of cycles, and thus Eulerian. So it is possible to find a 
Euler tour of each of its components, using only edges with flow in $H$. The subsequent shortcutting 
can only decrease the cost. Thus, the total cost of cycles found on line \ref{line:euler} is no more 
than the cost of the flow $H$. To describe how these cycles are incorporated into the path $P$, we 
show that each of them (or, equivalently, each connected component of $H$) shares exactly one node with $W$ (and thus with $P$). Note that every  component $A$ added to $H$ contains only 
nodes that are in $W$ at that time. Moreover, when this is done, all but one nodes of $A$ are 
expelled from $W$. So when several components of $H$ are connected by the addition of $A$, the 
invariant is maintained that there is one node per component that is shared with $W$. Now, 
suppose that $v$ is the vertex shared by the cycle obtained from component $X$ and the path $P$. 
On line \ref{line:inc}, we incorporate the cycle into the path by following the path up to $v$, 
then following the cycle up to the predecessor of $v$, then connecting it to the successor of $v$ 
on the path. By triangle inequality, the resulting longer path costs no more than the sum of costs 
of the old path and the cycle.
\end{proofof}

\section{Integrality gap for relaxed ATSPP LP} \label{sec:atspp-alpha1}

Consider LP($\alpha$) 
with $\frac{1}{2} < \alpha \leq 1$, and say that it has cost $L$. We bound its integrality gap for ATSPP. 
As in the proof of Lemma \ref{lem:pathcyc}, we can apply splitting-off to obtain a feasible solution 
to LP($\alpha$), of cost at most $L$, on a graph induced by a subset of vertices $W\subseteq V$. 
Let $x$ be such a solution. Lemma \ref{lem:tr} below shows how to use $x$ to find a feasible fractional 
solution to LP (\ref{intlp}) on $W$, of cost within a constant factor of $L$, namely $\frac{3}{2\alpha-1} L$. 
Since LP (\ref{intlp}) has integer optimum, there is an integer solution to  LP (\ref{intlp}) on $W$, and thus a path-cycle cover,  
of cost at most $\frac{3}{2\alpha-1} L$. Then we can proceed as in Section \ref{sec:atspp-ig}, applying Algorithm \ref{alg:ig} to bound the cost of the resulting ATSPP solution by $2\log n +1$ times the path-cycle cover cost. This shows that LP($\alpha$) has integrality gap at most 
$\frac{6\log n +3}{2\alpha-1}$, proving Theorem \ref{thm:a-ig}.

\begin{lemma} \label{lem:tr}
Given a solution $x$ to LP($\alpha$), with $\alpha>1/2$, on a subset $W\subseteq V$, with cost at most $L$, a feasible solution to LP (\ref{intlp}) on $W$ of cost at most $\frac{3}{2\alpha-1} L$ can be found.
\end{lemma}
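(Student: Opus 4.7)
The goal is to produce a fractional $y$ feasible for LP~\eqref{intlp} on $W$ with $c(y) \le \frac{3L}{2\alpha-1}$; since the proof of Lemma~\ref{lem:pathcyc} established that LP~\eqref{intlp} has an integer optimum (namely a minimum-cost path-cycle cover on $W$), this immediately gives an integer path-cycle cover of the same cost.

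The plan is to build $y$ as a non-negative combination of a scaled copy of $x$ and a cheap auxiliary $s$-$t$ subflow extracted from $x$. First I would take a standard path-cycle decomposition of $x$; since the $s$-$t$ path portion has total mass $\sum_i a_i = 1$ and total cost $\sum_i a_i c(P_i) \le L$, an averaging argument produces a unit $s$-$t$ sub-flow $f \le x$ (supported on the paths of the decomposition) with $c(f) \le L$. The candidate $y$ is then defined as an affine combination of $x$ and $f$ (and, if needed, a small corrective circulation obtained by the same splitting-off construction used in the proof of Lemma~\ref{lem:pathcyc}). The coefficient of $x$ is forced by the requirement to push the minimum internal in-degree $\alpha$ up to $1$: the natural scaling is $\frac{1}{2\alpha-1}$, which is positive precisely because $\alpha > 1/2$---this is the only place the hypothesis $\alpha > 1/2$ enters. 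The coefficient of $f$ is then pinned down by $y(\delta^+(s)) = 1$ and the requirement that the subtraction not drive any in-degree below $1$.

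Once $y$ is written down, I would verify the LP~\eqref{intlp} constraints in turn: flow conservation (inherited from $x$ and $f$), the boundary equalities at $s$ and $t$ (from matching $s$-out-flow and $t$-in-flow), $y(\delta^-(v)) \ge 1$ for internal $v$ (using $x(\delta^-(v)) \ge \alpha$ and $f(\delta^-(v)) \le 1$), and componentwise non-negativity $y \ge 0$ (which uses $f \le x$ so that nothing is over-subtracted). The cost bound follows by summing the weighted contributions of $x$, $f$, and the correction---each of cost at most $L$---with coefficients that sum to at most $\frac{3}{2\alpha-1}$.

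The main obstacle I anticipate is the worst-case profile at a vertex $v$ with $x(\delta^-(v)) = \alpha$ exactly and $f(\delta^-(v)) = 1$: here $v$ is a bottleneck of the unit $s$-$t$ flow and simultaneously at the cut lower bound. Any linear combination of only $x$ and $f$ collapses in this case unless one introduces an additional corrective term to boost $v$'s in-degree, and it is this extra term (of cost at most $\frac{L}{2\alpha-1}$) that is responsible for promoting the naive bound $\frac{2L}{2\alpha-1}$ to the stated $\frac{3L}{2\alpha-1}$. Once this bottleneck is resolved, the integrality of LP~\eqref{intlp} converts the fractional $y$ into the desired path-cycle cover of cost at most $\frac{3L}{2\alpha-1}$.
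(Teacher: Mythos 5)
Your plan has a genuine gap at exactly the point you flag as the ``main obstacle,'' and the obstacle is not a corner case to be patched---it is the entire content of the lemma. Since $f\leq x$, the bad profile is a vertex $v$ with $f(\delta^-(v))=x(\delta^-(v))=\alpha$, i.e.\ a vertex all of whose mass in $x$ lies on $s$-$t$ flow paths and meets the cut constraint (\ref{alp-sc}) with equality. For any combination $y=c_1x-c_2f$ with $c_1-c_2=1$ (forced by $y(\delta^+(s))=1$), such a vertex gets $y(\delta^-(v))=(c_1-c_2)\alpha=\alpha<1$, so no choice of coefficients helps; you recognize this, but the ``small corrective circulation obtained by the same splitting-off construction used in the proof of Lemma~\ref{lem:pathcyc}'' does not exist as described. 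Splitting-off only eliminates vertices outside $W$; it cannot manufacture a circulation through a surviving vertex $v$ whose $x$-mass has no cyclic component at all, and in the worst case the cycle part of the decomposition of $x$ carries zero flow through $v$, so there is nothing to scale up. Constructing a cheap substitute that gives every such ``path-heavy'' vertex one full unit of in-degree, while respecting the degree-one constraints at $s$ and $t$, is precisely what the proof must do, and your proposal leaves it unspecified. Relatedly, your claim that $\alpha>1/2$ enters only through positivity of the coefficient $\frac{1}{2\alpha-1}$ is not where the hypothesis actually does its work.

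The paper's proof resolves this by changing representation rather than taking a linear combination. Scale $x$ by $1/\alpha$, decompose into an acyclic path part $F_p$ and a cycle part $F_c$, fix $\gamma$ with $\frac{1}{2\alpha}<\gamma<1$, and shortcut $F_p$ past every vertex carrying less than $\gamma$ units of path flow. Each surviving vertex carries at least $\gamma$ units out of the total $1/\alpha<2$ units of path flow---\emph{more than half}, which is where $\alpha>1/2$ is really used---so any two such vertices share flow paths of total value at least $2\gamma-1/\alpha>0$, and consecutive vertices in the topological order share a path on which they are adjacent. Hence the integral path $P$ through these vertices (one unit of in-degree for each) costs at most $\frac{1}{2\gamma-1/\alpha}\cdot cost(F_p)$, while every shortcut vertex retains at least $1-\gamma$ units of cycle flow and is handled by scaling $F_c$ by $\frac{1}{1-\gamma}$. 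The solution to LP (\ref{intlp}) is one unit of flow on $P$ plus the scaled $F_c$, and choosing $\gamma=\frac13+\frac1{3\alpha}$ balances the two factors at $\frac{3\alpha}{2\alpha-1}$, giving cost $\frac{3}{2\alpha-1}L$. In other words, the ``corrective term'' your outline defers to is exactly this combinatorial path-building argument (in the spirit of the consecutive-nodes argument in the proof of Theorem~\ref{thm:ig}); without it, or an equivalent construction, your proof does not go through.
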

\begin{proof}
Multiply $x$ by $1/\alpha$. Now it constitutes a flow $F$ of $1/\alpha$ units from $s$ to $t$. 
Constraints (\ref{alp-sc}), restricted to sets of size 1, imply that each node $u$ now has at 
least one unit of flow going through it. Find a flow decomposition of $F$ into paths and cycles, 
so that the union of the paths is acyclic. Let $F=F_p+F_c$, where $F_p$ is the sum of flows on 
the paths in our decomposition, and $F_c$ is the sum of flows on the cycles. 

Choose some $\gamma$ such that $\frac{1}{2\alpha}<\gamma < 1$. For any node $u$ such that the 
amount of $F_p$ flow going through $u$ is less than $\gamma$, shortcut any flow decomposition 
paths that contain $u$, so that there is no more $F_p$ flow going through $u$. Let $U\subseteq W$ 
be the set of vertices still participating in the $F_p$ flow. Then each vertex in $U$ has at least 
$\gamma$ units of $F_p$ flow going through it, and each vertex in $W\setminus U$ has at least 
$1-\gamma$ units of $F_c$ flow going through it.  

We find a topological ordering of vertices in $U$ according to $F_p$ (which is acyclic), and 
let $P$ be an $s$-$t$ path that visits the nodes of $U$ in this topological order. We claim 
that the cost of $P$ is within a constant factor of the cost of $F_p$.
The argument for this is similar to one in the proof of Theorem \ref{thm:ig}.
Out of $1/\alpha$ units of flow 
going from $s$ to $t$ in $F_p$, each vertex $u\in U$ carries $\gamma$ units, which is more than half of the total amount 
(as $\gamma>1/2\alpha$). So for any two such vertices $u$ and $v$, there must be shared flow paths that carry flow of at least $2\gamma-1/\alpha$ units. 
In particular, for every 
two consecutive nodes $u,v\in P$, $F_p$ must contain such shared paths in which $v$ immediately follows $u$.  So the cost of $P$ is at most $\frac{1}{2\gamma-1/\alpha}$ times the cost of $F_p$.

We now define $\tilde{x}$ as a flow equal to one unit of $s$-$t$ flow on the path $P$ plus 
$\frac{1}{1-\gamma}$ times the flow $F_c$.  We claim that $\tilde{x}$ is a feasible solution 
to LP (\ref{intlp}): there is exactly one unit of flow from $s$ to $t$ (as $F_c$ consists of 
cycles not containing $s$ or $t$); there is flow conservation at all nodes except $s$ and $t$; 
each vertex in $U$ (and thus in $P$) has at least one unit of flow going through it; and each 
vertex in $W\setminus U$ has at least one unit of flow going through it (as it had at least 
$1-\gamma$ units of $F_c$ flow).  The cost of this solution is at most 
\begin{eqnarray*}
 \frac{1}{2\gamma-1/\alpha}\cdot cost(F_p) + \frac{1}{1-\gamma}\cdot cost(F_c)
 & \leq&  \max\left(\frac{1}{2\gamma-1/\alpha}, \frac{1}{1-\gamma}\right) 
\cdot \frac{1}{\alpha}\, L.
\end{eqnarray*}
If we set $\gamma = \frac{1}{3} + \frac{1}{3\alpha}$, which satisfies $\frac{1}{2\alpha}<\gamma < 1$, 
we see that the cost of $\tilde{x}$ is at most $\frac{3}{2\alpha-1}\cdot L$.
\end{proof}

\section{Relaxed ATSPP LP with $\alpha \leq 1/2$} \label{sec:atspp-alpha2}

Consider LP (\ref{alphalp}) with $\alpha = \frac{1}{k} \leq \frac{1}{2}$ for some integer $k \geq 2$. 
It can be shown that, as a relaxation for the ATSPP problem, this LP has unbounded integrality gap.
For example, let $D$ be an arbitrarily large value and consider the shortest path metric obtained from
the graph in Figure \ref{fig:badgap}. One can verify that the following assignment of $x$-values
to the arcs is feasible for LP (\ref{alphalp}) with $\alpha = \nicefrac{1}{2}$.
Assign a value of \nicefrac{1}{2} to arcs $(1,2)$, $(3,2)$, $(3,6)$, $(1,4)$, $(5,4)$, 
and $(5,6)$ and a value of 1 to arcs $(2,3)$ and $(4,5)$.
Every other arc is assigned a value of 0. This assignment is feasible for the linear program and
has objective function value 5. On the other hand, any Hamiltonian path
from 1 to 6 has cost at least $D$.

\begin{figure}
\begin{center}
\includegraphics[scale=0.45]{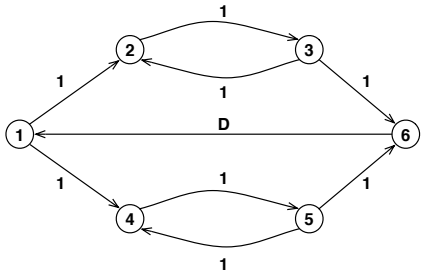}
\caption{Bad gap example for LP (\ref{alphalp}) with $\alpha = \nicefrac{1}{2}$. Here, $D$ is an arbitrarily large integer.}\label{fig:badgap}
\end{center}
\end{figure}

Let $L$ be the cost of the optimal solution to LP($\alpha = \frac{1}{k}$). We show 
how to find $k\cdot \log n$ paths from $s$ to $t$, such that each node of $G$ appears on at least 
one path, and the total cost of all these paths is at most $k \log n \cdot L$. Let us define a 
$k$-path-cycle cover to be a set of $k$ disjoint 
paths from $s$ to $t$ and zero or more cycles, which together cover all nodes. 
Like a path-cycle cover, the minimum-cost $k$-path-cycle cover can be found by a combinatorial algorithm
by creating $k$ copies of both $s$ and $t$ and using the matching algorithm described in \cite{lam:newman}.

\begin{lemma} \label{lem:kpathcyc}
For any subset $W\subseteq V$ that includes $s$ and $t$, there is a 
$k$-path-cycle cover of $W$ with total cost at most $k L$.
\end{lemma}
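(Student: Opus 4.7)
The plan is to mirror the proof of Lemma \ref{lem:pathcyc}, scaling the flow by $k$ throughout. Let $x$ be a feasible solution to LP($\alpha = 1/k$) of cost $L$. Setting $\hat{x} = kx$ yields a vector of cost $kL$ that sends $k$ units of flow from $s$ to $t$ and satisfies $\hat{x}(\delta^-(S)) \geq 1$ for every nonempty $S \subset V$ with $s \notin S$. Adding a dummy arc from $t$ to $s$ of capacity $k$ makes the resulting weighted graph Eulerian, so Theorem \ref{thm:so} applies. I would then repeatedly split off pairs of arcs incident to each vertex of $V \setminus W$ until all such vertices are isolated; this preserves $s$-to-$v$ connectivity of at least $1$ for every $v \in W$ and, by the triangle inequality, does not increase the total cost. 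The restriction $\hat{x}'$ of the resulting capacities to the graph induced by $W$ is then a feasible flow on $W$ of cost at most $kL$ that sends $k$ units from $s$ to $t$ and still satisfies $\hat{x}'(\delta^-(S)) \geq 1$ for every nonempty $S \subsetneq W$ with $s \notin S$.

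Next, as in Lemma \ref{lem:pathcyc}, I would drop all cut constraints except those for singleton sets. What remains is the analogue of LP (\ref{intlp}) but with $x(\delta^+(s)) = x(\delta^-(t)) = k$ in place of the value $1$ appearing in (\ref{amtfl}). This system is equivalent to a minimum-cost circulation LP (reinstate the dummy $ts$ arc with capacity $k$), so its constraint matrix is totally unimodular and it has an integer optimum. Since $\hat{x}'$ is feasible, I obtain an integral solution on $W$ of cost at most $kL$ that sends $k$ units from $s$ to $t$ and routes at least one unit through every vertex of $W \setminus \{s,t\}$.

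The only slightly subtle step is converting this integral solution into a genuine $k$-path-cycle cover, in which each internal vertex has in- and out-degree exactly one. The integral solution may use a vertex $u$ with $x(\delta^+(u)) > 1$. To remedy this I would add back $k$ parallel dummy arcs $t \to s$, so every connected component becomes Eulerian as a multigraph; find a Euler tour of each component; and shortcut over any repeated vertex, using the triangle inequality to control the cost. After deleting the dummy arcs, what remains is exactly $k$ vertex-disjoint $s$-$t$ paths together with some cycles on the remaining vertices of $W$, that is, a $k$-path-cycle cover of $W$ of total cost at most $kL$.
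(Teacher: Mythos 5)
Your proof is correct and takes essentially the same route as the paper's: split off the vertices outside $W$, scale the LP($\alpha=1/k$) solution by $k$ to obtain a feasible solution of cost at most $kL$ to LP (\ref{intlp}) with $x(\delta^+(s))=x(\delta^-(t))=k$, invoke integrality of this circulation-type LP, and shortcut the integral solution into a $k$-path-cycle cover. The only differences are cosmetic (you scale before splitting off rather than after, and you spell out the Euler-tour shortcutting, where one should be careful to shortcut only over repeated internal vertices, keeping the $k$ dummy $t$--$s$ arcs intact so that the tour breaks into $k$ internally disjoint $s$-$t$ paths).
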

\begin{proof}
As in the proof of Lemma \ref{lem:pathcyc}, we apply splitting-off to a solution of LP($\alpha$) to get a solution to LP($\alpha$) on a subset $W\subseteq V$ of vertices, of 
no greater cost. Now, if we multiply this solution by $k$, we get a feasible solution to 
LP (\ref{intlp}) with constraints (\ref{amtfl}) replaced with 
$$x(\delta^+(s))=x(\delta^-(t))=k.$$
The cost of this solution is no more than $kL$. 
But this LP has an integer optimum, which, possibly after shortcutting, is exactly a $k$-path-cycle cover. 
\end{proof}

\begin{proofof}{Theorem \ref{thm:kpath}}
We start with $W=V$ and repeat the following until $W=\{s,t\}$: 
\begin{compactenum}
\item
Find a $k$-path-cycle cover $F$ of $W$ of cost at most $kL$, as guaranteed by Lemma \ref{lem:kpathcyc}.
\item
Remove from $W$ all nodes (except $s$ and $t$) that participate in paths of $F$.
\item
For each cycle $C$ of $F$, choose a representative node $v\in C$, and remove from $W$ all nodes of $C$ except $v$.
\end{compactenum}

\noindent
Let us say that the procedure terminates after $T$ iterations. As the size of $W$ halves in each iteration, 
$T$ is at most $\log n$. In the last iteration, all elements of $W$ must have participated in the paths, 
as otherwise there would be a node $v$ that remains in $W$ after this iteration. This implies that the graph 
$\bigcup F$ is connected. It also has total cost at most $kL \log n$, and, if we add $kT$ edges from $t$ to 
$s$, becomes Eulerian. This means that we can construct 
$kT\leq k\log n$ paths from $s$ to $t$, covering all nodes of $V$, out of edges of $\bigcup F$. 
\end{proofof}

\section{Algorithm for $k$-person ATSPP}\label{sec:kperson}
In this section we consider the 
$k$-person asymmetric traveling salesman path problem. 
The LP relaxation for this problem is similar to LP (\ref{alphalp}), but with 
$x(\delta^+(s))=x(\delta^-(t))=k$ for constraint (\ref{con2}) and $x(\delta^-(S))\geq 1$ for constraint (\ref{alp-sc}). 
Arguments similar to those in Section \ref{sec:atspp-ig} show that a $k$-path-cycle cover on any subset $W$ is a 
lower bound on the value of the LP relaxation for the $k$-person ATSPP. Our algorithm constructs a solution that 
uses each edge of $O(k \log n)$ $k$-path-cycle covers at most $k$ times, proving a bound of $O(k^2 \log n)$ on the  approximation ratio and the integrality gap.

Our algorithm starts by running lines \ref{line:one}-\ref{line:endfor} of Algorithm \ref{alg:ig}, except 
with $T = (k+1)\log n +1$ iterations of the loop and finding minimum-cost $k$-path-cycle covers instead of 
the path-cycle covers on line \ref{line:findflow}. Then it finds $k$ $s$-$t$ paths in the resulting acyclic
 graph $F$, satisfying conditions of Lemma~\ref{lem:kpaths} below. The algorithm concludes by incorporating each 
component of the circulation $H$ into one of the obtained paths, similarly to lines \ref{line:cyc}-\ref{line:endcyc} 
of Algorithm \ref{alg:ig}. 

\begin{lemma} \label{lem:kpaths}
After lines \ref{line:one}-\ref{line:endfor} of Algorithm \ref{alg:ig} are executed with $T$ iterations of the 
loop on line \ref{line:iter} and finding minimum-cost $k$-path-cycle covers on line \ref{line:findflow}, there 
exist $k$ $s$-$t$ paths in the resulting acyclic graph $F$, such that each edge of $F$ is used at most $k$ times, 
and every node of $F$ is contained in at least one path. Moreover, these paths can be found in polynomial time.
\end{lemma}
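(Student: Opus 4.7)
My plan is to use Dilworth's theorem on the reachability partial order on $W$ in the acyclic graph $F$, bounding the maximum antichain via a flow-counting argument. I would first verify that Claim \ref{claim:xy}, Lemma \ref{lem:label}, and Lemma \ref{lem:amtflow} all carry over to the modified algorithm essentially verbatim, since each $k$-path-cycle cover still sends exactly one unit of flow through every vertex in $W\setminus\{s,t\}$ (the $k$ paths and the cycles are vertex-disjoint). Consequently the labels $l_v$ remain bounded by $\log n$, and after $T = (k+1)\log n + 1$ iterations every surviving $v \in W$ carries at least $T - \log n = k\log n + 1$ units of flow in the final acyclic flow $F$, whose total $s$-to-$t$ value is $kT$.

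The heart of the argument is an antichain bound in the reachability partial order on $W$ induced by $F$. I would decompose $F$, in polynomial time by standard flow decomposition, into $kT$ unit $s$-$t$ paths; since $F$ is acyclic, each such path is a chain in the reachability order. For any antichain $A \subseteq W \setminus \{s,t\}$, each unit path meets $A$ in at most one vertex, so double-counting gives
\[
|A|\,(T-\log n) \;\leq\; \sum_{v \in A} \bigl|\{P : v \in P\}\bigr| \;\leq\; kT,
\]
and therefore $|A| \leq kT/(T-\log n) = k + k\log n/(k\log n + 1) < k+1$, hence $|A| \leq k$ by integrality. This is precisely the step where the choice $T = (k+1)\log n + 1$ is essential; it is the main technical point of the proof.

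Once the antichain is bounded by $k$, Dilworth's theorem produces a partition of $W \setminus \{s,t\}$ into at most $k$ chains $C_1, \ldots, C_{k'}$, and this partition is computable in polynomial time by bipartite matching on the comparability graph whose edges $(u,v)$ indicate that $u$ reaches $v$ in $F$. For each chain $C_j = \{u_1 \prec u_2 \prec \cdots \prec u_{m_j}\}$ I would form the $s$-$t$ walk obtained by concatenating $F$-paths $s \leadsto u_1 \leadsto \cdots \leadsto u_{m_j} \leadsto t$; each stitch exists because every vertex of $F$ carries positive flow and is therefore reachable from $s$ and reaches $t$ in $F$. Because $F$ is acyclic, the walk automatically has no repeated vertices and is a simple path. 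Padding with arbitrary $s$-$t$ paths of $F$ brings the total to exactly $k$ paths. Since each is simple in $F$ it uses every edge at most once, so across all $k$ paths each edge is used at most $k$ times, and every vertex of $F$ lies on some chain and hence on some path. Aside from the antichain estimate, every step is routine.
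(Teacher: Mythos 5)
Your proposal is correct and follows essentially the same route as the paper: decompose $F$ into $kT$ edge-disjoint $s$-$t$ paths, use the fact that every surviving vertex lies on at least $T-\log n$ of them to bound the maximum antichain (the paper phrases this as a maximum independent set in the comparability graph, invoking perfection, which is exactly Dilworth's theorem in your formulation), then turn the $\leq k$ chains/cliques into $s$-$t$ paths via a topological ordering and acyclicity, and compute them by the standard bipartite-matching reduction for minimum path cover of a DAG. The only difference is terminological (Dilworth on the reachability poset versus clique cover of a perfect comparability graph), so nothing substantive is gained or lost.
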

\begin{proof}
We prove the existence result first. 
We note that the graph $F$ can support $kT$ units of flow from $s$ to $t$. This is because in each of the $T$ 
iterations, $k$ $s$-$t$ paths were added to the graph, whereas the removal of cycles does not decrease the 
amount of flow supported. So $F$ can be decomposed into $kT$ edge-disjoint paths from $s$ to $t$. Moreover,
 each node of $F$ participates in at least $T-\log n$ of these paths by Lemma \ref{lem:amtflow}.

Let $K$ be the comparability graph obtained from $F$. Namely, $K$ has the same set of nodes as $F$, and there
 is an undirected edge between nodes $u$ and $v$ in $K$ if $F$ contains a directed path either from $u$ to $v$
 or from $v$ to $u$. We claim that the nodes of $K$ can be partitioned into at most $k$ cliques. As comparability
 graphs are perfect graphs \cite{golumbic}, the minimum number of cliques that $K$ can be partitioned into is
 equal to the size of the maximum independent set in $K$. So suppose, for the sake of contradiction, that $K$ 
contains an independent set $I$ of size $k+1$. Then no nodes in $I$ must share any of the paths in our decomposition
 of $F$. Since each appears on at least $T-\log n$ paths, there must be at least $(k+1) \cdot (T-\log n)$ paths total 
in the decomposition. However, this is a contradiction, 
since $(k+1) \cdot (T-\log n) = (k+1)k \log n +k +1 > (k+1)k \log n +k = kT$.

Given the set of $k$ cliques in $K$, we can convert each of them into an $s$-$t$ path in $F$. For each
 such clique $C$, order the nodes of $C$ in a way consistent with a topological ordering of $F$. Then
 $F$ contains a path from each node $u$ of $C$ to the next node $v$: the existence of a $uv$ edge in $K$
 shows that there is a path between these nodes in $F$, and the topological ordering guarantees that this
 path goes in the correct direction. There must also be paths from $s$ to the first node of $C$ as well as
 from the last node of $C$ to $t$, since $s$ is the only source and $t$ is the only sink of $F$. Furthermore,
 the acyclicity of $F$ shows that each edge of $F$ is used at most once for connecting nodes in $C$. As there
 are $k$ cliques, each edge is used at most $k$ times in total. If the number of cliques is smaller than $k$,
 then we can add arbitrary paths from $F$ to our collection to obtain exactly $k$ paths. 

To find the required paths algorithmically, we construct the following bipartite graph $B$ from $F$ with bipartitions $X$ and $Y$. For each $v \in F \setminus \{s,t\}$, add nodes $x_v$ to $X$ and $y_v$ to $Y$ (note that $|X| = |Y|$). Now, for each ordered pair of nodes $(u,v)$ such that there is a directed path from $u$ to $v$ in $F$, add an edge from $x_u$ to $y_v$. Let $K'$ be the directed graph obtained from $K-\{s,t\}$ by orienting each edge $uv$ according to flow $F$. It is easy to see that any collection of $q$ disjoint cliques covering $K-\{s,t\}$ corresponds to a covering of $K'$ by $q$ vertex-disjoint paths and vice-versa. We claim that there is a covering of $K'$ by $q$ disjoint paths if and only if there is a matching in $B$ that leaves only $q$ nodes in $X$ and $q$ nodes in $Y$ unmatched.

Consider a covering of $K'$ by $q$ vertex-disjoint paths $P$. Form a matching $M = \{x_uy_v : uv \rm{~used~by~} P\}$.
Since $P$ is a collection of vertex-disjoint paths, each node has indegree and outdegree at most one in $P$, so
$M$ is a valid matching. Furthermore, since $P$ covers the nodes of $K'$ with $q$ paths, then exactly $q$ nodes
have indegree 0 and exactly $q$ nodes have outdegree 0. Thus, in $M$ exactly $q$ nodes of $X$ and $q$ nodes of $Y$ are unmatched.
Conversely, let $M$ be a matching of $B$ that leaves exactly $q$ nodes of $X$ unmatched. For each unmatched node $y_u \in Y$,
form a directed path in $K'$ starting at $u$ by the following process. If $x_u$ is matched to, say, $y_v$, then add arc $uv$
to the path. Continue this process from $y_v$ until the corresponding node in $X$ is not matched. This will
form $q$ vertex-disjoint paths in $K'$ that cover all nodes.

As noted earlier, we know that $K$ can be covered by at most $k$ disjoint cliques. Therefore, there is a matching in $B$ that
leaves at most $k$ nodes in $X$ and $k$ nodes in $Y$ unmatched. Compute such a matching and map it to the corresponding paths
in $K'$. Finally, extend these at most $k$ paths to $s-t$ paths by adding an arc from $s$ to the start of each path and an arc
from the end of each path to $t$. Again, since the only source in $F$ is $s$ and the only sink in $F$ is $t$, these arcs correspond
to edges in $K$.
\end{proof}

\begin{proofof}{Theorem \ref{thm:kperson}}
Let $L$ be the cost of a linear programming relaxation for the problem. 
The edges of $F$ as well as the edges used to connect the eulerian components of $H$ to the paths come from the union of $T$ $k$-path-cycle covers on subsets of $V$, and thus cost at most $T\cdot L = O(k\log n)\cdot L$. However, the algorithm may use each edge of $F$ up to $k$ times in the paths of Lemma \ref{lem:kpaths}, which makes the total cost of the produced solution at most $O(k^2 \log n)\cdot L$.
\end{proofof}


\section{Approximation algorithm for Directed Latency} \label{sec:latency}
We introduce LP relaxation (\ref{lp:lat}) for the directed latency problem.
The use of $x_{uw}$ and $f^v_{uw}$ variables in an integer programming formulation for directed latency was proposed by Mendez-Diaz et al.\ \cite{mendez-diaz}, who do a computational evaluation of the strength of its LP relaxation. However, our LP formulation uses a different set of constraints than the one in \cite{mendez-diaz}.

In LP (\ref{lp:lat}), a variable $x_{uw}$ indicates
that node $u$ appears before node $w$ on the path. Similarly, $x_{uvw}$ for three distinct nodes $u$, $v$, $w$ indicates that they appear in this order on the path. For every node $v\neq s$,
we send one unit of flow from $s$ to $v$, and we call it the $v$-flow. 
Then $f^v_{uw}$ is the amount of $v$-flow going through edge $(u,w)$, and $\ell(v)$ is the latency of node $v$.
To show that this LP is a relaxation of the directed latency problem, given a solution path $P$, 
we can set $f^v_{uw}=1$ whenever the edge $(u,w)$ is in $P$ and $v$ occurs later than $u$ in $P$, 
and $f^v_{uw}=0$ otherwise. So $f^v$ is one unit of flow from $s$ to $v$ along the path $P$.
Also setting the ordering variables $x_{uw}$ and $x_{uvw}$ to 0 or 1 appropriately and setting $\ell(v)$ to the latency of $v$ in $P$, we get a feasible solution to LP (\ref{lp:lat}) of the same cost as the total latency of $P$.

\begin{figure}[ht]
\begin{align}
\mbox{min}~ & \sum_{v\neq s} \ell(v) & \label{lp:lat}\\
\mbox{s.t.}~\, & \ell(v) \geq \sum_{uw} d_{uw} f^v_{uw}  
\hspace{33mm} \forall v \notag \\
& \ell(v) \geq [d_{su} + d_{uw} + d_{wv}]\, x_{uwv} \label{lk-const}  
\hspace{14mm} \forall u,w,v:|\{u,w,v\}|=3  \\
& \ell(t) \geq \ell(v) 
\hspace{45mm} \forall v \notag \\
& x_{uw}= x_{vuw} + x_{uvw} + x_{uwv} \label{xijk-const} 
\hspace{17mm} \forall u,w,v:|\{u,w,v\}|=3  \\
& x_{uw} + x_{wu} = 1  
\hspace{38mm} \forall u,w:u\neq w \label{xij-const} \\
& x_{su} = x_{ut} = 1  
\hspace{40mm} \forall u\notin \{s,t\} \notag \\
& \sum_w f^v_{wu} = \sum_w f^v_{uw}     
\hspace{33mm} \forall v, \forall u\notin \{s,v\} \label{vflcon} \\
& \sum_w f^v_{sw} = \sum_w f^v_{wv} = 1 
\hspace{27mm} \forall v \notag\\
& f^v_{us} = f^v_{vu} = 0  
\hspace{40mm} \forall u,v \label{fvus-const} \\
& \sum_w f^v_{uw} = x_{uv} 
\hspace{39mm} \forall v, u\neq v \label{fx-const} \\
& f^v_{uw} \leq f^t_{uw}   
\hspace{45mm} \forall u,w,v \label{univ-const}\\
& \sum_{u\notin S, w\in S} f^v_{uw} \geq x_{yv} 
\hspace{33mm} \forall S \subset V\setminus\{s\}, y\in S \label{lat-set-const} \\
& x_{uw}, x_{uwv}, f^v_{uw} \geq 0  
\hspace{32.5mm} \forall u,w,v   \notag
\end{align}
\end{figure}

Constraints (\ref{vflcon}) are the flow conservation constraints for $v$-flow at node $u$.
 Constraints (\ref{fvus-const}) ensure that no $v$-flow enters $s$ or leaves $v$. 
Constraints (\ref{fx-const}) say that $v$-flow passes through $u$ if and only if $u$ occurs before $v$. 
Since the $t$-flow goes through every vertex, when all the variables are in $\{0,1\}$, it defines an $s$-$t$ path.
We can think of the $t$-flow as the universal flow
and Constraints (\ref{univ-const}) ensure that every  $v$-flow follows an edge which
has a universal flow on it. 
Constraints (\ref{lat-set-const}), in an integer solution, ensure that if a set $S$ contains some node 
$y$ that comes before $v$ (i.e.\ $x_{yv}=1$), then at least one unit of $v$-flow enters $S$. 

We note 
that a min-cut subroutine can be used to detect violated constraints of type (\ref{lat-set-const}), allowing us to solve 
 LP (\ref{lp:lat}) using the ellipsoid method. 
Our analysis does not actually use 
Constraints (\ref{univ-const}), so we can drop them from the LP. Although without these
constraints the corresponding integer program may not be an exact formulation
for the directed latency problem,
we can still find a solution whose cost is within factor $O(\log n)$ of this relaxed LP.

\begin{lemma}\label{lem:scale}
Given a feasible solution to LP (\ref{lp:lat}) with objective value
$L$, we can find another solution of value at most $(1+\frac{1}{n})L$ in which the ratio of
the largest to smallest latency $\ell(\,)$ is at most $n^2$.
\end{lemma}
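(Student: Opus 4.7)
The plan is to raise any too-small latencies up to a safe threshold, argue the modification is feasible, and control the resulting cost increase. The central observation is that the constraint $\ell(t) \geq \ell(v)$ for every $v$ means $\ell(t)$ is the maximum latency in any feasible LP solution. Combined with the objective $L = \sum_{v \neq s} \ell(v)$ and the fact that there are only $n-1$ summands, this yields $\ell(t) \geq L/(n-1)$, so $\ell(t)$ is comparable to $L$ (up to a factor of $n$). Hence the ratio problem really is a matter of latencies that are too \emph{small} relative to $\ell(t)$, not latencies that are too large.

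Given this, I would set the threshold $\tau := \ell(t)/n^2$ and define a new solution by keeping all $x$ and $f$ variables unchanged and replacing each $\ell(v)$ with $\max(\ell(v),\tau)$, except that $\ell(t)$ itself is left untouched. This immediately gives the ratio bound: every latency is at least $\tau$, and the largest latency is still $\ell(t) = n^2 \tau$, so the ratio is at most $n^2$.

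Next I would verify feasibility of the modified solution. The only constraints in LP (\ref{lp:lat}) that mention the $\ell$ variables are the three lower bounds $\ell(v) \geq \sum_{uw} d_{uw} f^v_{uw}$, $\ell(v) \geq [d_{su}+d_{uw}+d_{wv}]\,x_{uwv}$, and $\ell(t) \geq \ell(v)$. The first two are preserved under increasing $\ell(v)$, and the third is preserved because for $v \neq t$ the new value of $\ell(v)$ is at most $\tau = \ell(t)/n^2 \leq \ell(t)$, and $\ell(t)$ is not modified. No constraint involves the $\ell$ variables as an upper bound in any other way, so the solution remains feasible.

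Finally, the cost analysis is a direct estimate: the total increase in the objective is at most $(n-1)\tau \leq n \cdot \ell(t)/n^2 = \ell(t)/n \leq L/n$, giving new objective at most $(1+\tfrac{1}{n})L$. The main obstacle, such as it is, is just making sure no upper-bound constraint on an individual $\ell(v)$ is hidden in the LP; once one observes that the only relevant constraint is $\ell(t) \geq \ell(v)$ and that this is what dictates the choice of threshold $\tau \leq \ell(t)$, the rest is an easy calculation. (The degenerate case $\ell(t)=0$ forces $L=0$ and is trivial.)
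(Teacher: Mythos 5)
Your proposal is correct and matches the paper's proof essentially verbatim: both raise every latency to $\max\{\ell(v),\ell(t)/n^2\}$, note that feasibility is preserved since $\ell$ appears only in lower-bound constraints and the new values never exceed $\ell(t)$, and bound the cost increase by $n\cdot\ell(t)/n^2\leq L/n$ using $\ell(t)\leq L$. Your extra feasibility check and the degenerate case $\ell(t)=0$ are fine additions but not a different argument.
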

\begin{proof}
Let $({x},{\ell},{f})$ be a feasible solution with value $L$, with $\ell(t)$ the largest latency value in this solution. Note that $L\geq \ell(t)$.
Define a new feasible solution $({x},{\ell'},{f})$ by ${\ell'}(v)=\max\{\ell(v),\ell(t)/n^2\}$.
The total increase in the objective function is at most $n\cdot\frac{\ell(t)}{n^2}\leq L/n$ as there are $n$ nodes in total. Thus, the objective value of this new solution is at most $(1+1/n)L$.
\end{proof}

\medskip

Using Lemma \ref{lem:scale} and scaling the edge lengths (if needed), 
we can assume that we have a solution $({x},{\ell},{f})$ satisfying the following:

\begin{corollary} \label{cor:lstar}
There is a feasible solution $({x},{\ell},{f})$
in which the smallest latency is 1 and the largest latency is at most $n^2$ and whose cost is at most $(1+\frac{1}{n})$ times the optimum LP solution.
\end{corollary}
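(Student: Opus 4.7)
The plan is to observe that Corollary \ref{cor:lstar} is essentially a bookkeeping consequence of Lemma \ref{lem:scale} combined with a trivial rescaling of the input metric. Since both the objective value of any feasible LP solution and the optimum LP value are homogeneous of degree one in the edge lengths $d_{uw}$, scaling the metric by a positive constant multiplies all latencies (and both LP values) by the same factor without affecting feasibility or approximation ratios.

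First I would take an optimal solution $(x,\ell,f)$ of LP (\ref{lp:lat}) with objective value $L$, and apply Lemma \ref{lem:scale} to obtain a feasible solution $(x,\ell',f)$ whose value is at most $(1+\tfrac{1}{n})L$ and in which $\max_v \ell'(v)/\min_v \ell'(v) \leq n^2$. Let $m$ denote the smallest latency $\min_v \ell'(v)$ in this solution; by the discussion preceding the lemma $m > 0$, since otherwise the ratio bound would already force all latencies to vanish (and in that case the statement holds trivially after scaling the trivial edge set).

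Next I would rescale the input metric by replacing each $d_{uw}$ with $d_{uw}/m$. Under this transformation the LP is identical in form, every feasible solution scales its latencies (and objective) by the factor $1/m$, and in particular the optimum LP value on the rescaled instance equals $L/m$. Applying this scaling to $(x,\ell',f)$ yields a feasible solution $(x,\ell'',f)$ with $\ell''(v)=\ell'(v)/m$, so the smallest latency is exactly $1$ and the largest is at most $n^2$, while the objective value is at most $(1+\tfrac{1}{n})L/m$, which is $(1+\tfrac{1}{n})$ times the optimum LP value of the rescaled instance.

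There is essentially no hard step here; the only subtlety worth checking is that the rescaling truly preserves feasibility of every constraint of LP (\ref{lp:lat}). The ordering constraints (\ref{xijk-const}), (\ref{xij-const}), the flow-conservation constraints (\ref{vflcon})–(\ref{fx-const}), the universal-flow constraints (\ref{univ-const}), and the set constraints (\ref{lat-set-const}) involve only the $x$- and $f$-variables, which are unchanged. The latency constraints $\ell(v) \geq \sum d_{uw} f^v_{uw}$ and (\ref{lk-const}) have both sides scale by $1/m$, so they remain satisfied; and $\ell(t) \geq \ell(v)$ is preserved. Thus the rescaled triple is feasible and witnesses the corollary.
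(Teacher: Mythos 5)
Your proposal is correct and matches the paper's own (one-line) justification: the corollary is obtained exactly by applying Lemma \ref{lem:scale} and then rescaling the edge lengths so that the smallest latency becomes 1, with feasibility and the $(1+\frac{1}{n})$ factor preserved because all latency constraints and both LP values are homogeneous in the metric. Your extra checks (feasibility of each constraint under scaling, the degenerate all-zero-latency case) are routine but harmless additions to what the paper leaves implicit.
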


\noindent
Let $L^*$ be the value (i.e. total latency) of this solution.

The idea of our algorithm is to construct $s$-$v$ paths for several nodes $v$, such that together they cover all vertices of $V$, and then to ``stitch'' these paths together to obtain one Hamiltonian path. We use our results for ATSPP to construct these paths.  For this, we observe that parts of a solution to the latency  LP (\ref{lp:lat}) can be transformed to obtain feasible solutions to different instances of 
LP($\alpha$).  For example, we can construct a Hamiltonian $s$-$t$ path of total length $O(\log n)\cdot \ell(t)$ as follows. From a solution to LP (\ref{lp:lat}), take the $t$-flow defined by the variables $f^t_{uw}$, and notice that it constitutes a feasible solution to LP($\alpha=1$). In particular, since $x_{yt}=1$ for all $y$, constraints (\ref{lat-set-const}) of LP (\ref{lp:lat}) for  $v=t$ imply that the set constraints (\ref{alp-sc}) of LP (\ref{alphalp}) are satisfied.  The objective function value for LP (\ref{alphalp}) of this solution is at most $\ell(t)$. Thus, by Theorem \ref{thm:ig}, we can find the desired path. Of course, this path is not yet a good solution for the latency problem, as even nodes $v$ with $\ell(v) \ll \ell(t)$ can have latency in this path close to $O(\log n) \cdot\ell(t)$.  Our algorithm constructs several paths of different lengths, incorporating most nodes $v$ into paths of length $O(\log n) \cdot \ell(v)$, and then combines these paths to obtain the final solution.

\begin{algorithm*}[ht]
  \caption{~Directed Latency} \label{alg:latency} 
\begin{algorithmic}[1] 
\State Let $({x},{\ell},{f})$ be a solution to LP (\ref{lp:lat}). Let $S$ be the path $\{s\}$.
\State Partition the nodes into $g = \lfloor \log \ell(t)+1 \rfloor$
sets $V_1,\ldots,V_g$ with $v\in V_i$ if $2^{i-1}\leq\ell(v)< 2^i$.
\For {$i=1$ to $g-1$}\label{stp:forloop}
\For {$j=1$ to $2$}
\If {$V_i \neq \emptyset$}

\State Let $v_i^j = \argmax_{v\in V_i} |\{ u\in V_i : x_{uv} \geq \frac{1}{2} \}|$ \label{step:vij}
\Comment this maximizes the size of $B_i^j$ below

\State Let $A_i^j = \{u\in V: x_{uv_i^j} \geq \frac{2}{3}+\frac{2i-2+j}{24\log n}\}$
\State Let $B_i^j = \{u\in V_i: x_{uv_i^j} \geq \frac{1}{2} \}$
\Comment $|B_i^j| \geq (|V_i|-1)/2$

\State Find an $s$-$v^j_i$ path $P^j_i$, containing $A_i^j$, of cost $\delta_1 \log n \cdot 2^i$; append $P^j_i$ to $S$. \label{stp:pij}

\State Find $2 \log n$ $s$-$v^j_i$ paths ${\cal P}^j_i$, containing $B_i^j$, of total cost at most 
$2\log n\cdot 2^i$; append ${\cal P}^j_i$ to $S$. \label{stp:pijs}

\State $V_i = V_i \setminus (A_i^j \cup B_i^j \cup \{v_i^j\})$ \label{stp:decv}
\Comment size of $V_i$ is at least halved

\EndIf
\EndFor
\State Let $V_{i+1} = V_{i+1} \cup V_i$ \label{stp:move-nodes}
\Comment remaining nodes are carried over to the next set

\EndFor
\State Construct an $s$-$t$ path $P_g$, containing $V_g$, of cost at most $(2\log n+1)\cdot \ell(t)$.  Append $P_g$ to $S$. \label{stp:pst}
\State Shortcut $S$ over the later copies of repeated nodes. Output $S$.

\end{algorithmic}
\end{algorithm*}

\subsection{Constructing the paths}~
Algorithm \ref{alg:latency} finds an approximate solution to the directed latency problem, and we now explain how some of its steps are performed. The algorithm maintains a path $S$, initially containing only the source, and gradually adds new parts to it. This is done through operation \emph{append} on lines \ref{stp:pij}, \ref{stp:pijs}, and \ref{stp:pst}. To append a path $P$ to $S$ means to extend $S$ by connecting its last node to the first node of $P$ that does not already appear in $S$, and then following until the end of $P$. For example, if $S=sabc$ and $P=sbdce$, the result is $S=sabcdce$. Step \ref{stp:pijs} appends a set of paths to $S$. This just means sequentially appending all paths in the set, in arbitrary order, to $S$.

Next we describe how to build paths $P^j_i$ and ${\cal P}^j_i$ in Steps \ref{stp:pij} and \ref{stp:pijs}. We described above how to use Theorem \ref{thm:ig} to build a Hamiltonian $s$-$t$ path $P$ of length $(2 \log n+1)\cdot \ell(t)$, which is used on line \ref{stp:pst} of the algorithm.  The idea behind building paths $P^j_i$ and ${\cal P}^j_i$ with their corresponding length guarantees is similar.

To construct $P^j_i$, we do the following.
Since each node $u\in A^j_i$ has $x_{uv^j_i} \geq 2/3$,  the amount of ${v^j_i}$-flow
that goes through $u$ is at least $2/3$. We apply splitting-off on this flow to nodes outside of  $A^j_i$, and obtain a total of one unit of $s$-$v^j_i$ flow over the nodes in $A^j_i$, of cost no larger than $\ell(v^j_i)\leq 2^i$. This flow satisfies all the constraints of LP($\alpha=2/3$), including the set constraints (\ref{alp-sc}), which are implied by the set constraints (\ref{lat-set-const}) of the latency LP (\ref{lp:lat}), as $x_{uv^j_i}\geq 2/3$ for $u\in A^j_i$.
Thus, using Theorem \ref{thm:a-ig}, we can find a path from $s$ to $v^j_i$, spanning
all the nodes of $A^j_i$, whose cost is at most $\delta_1 \log n \cdot 2^i$ for some constant $\delta_1$.

To obtain the set of paths ${\cal P}^j_i$, we look at the $v^j_i$-flow going through each node of $B^j_i$, whose amount is at least $\frac{1}{2}$. After splitting-off all nodes outside of $B^j_i$, we get a feasible solution of cost at most $\ell(v^j_i)\leq 2^i$ to LP($\alpha = 1/2$). 
By Theorem \ref{thm:kpath}, we can find 
$2 \log n$ paths, each going from $s$ to $v^j_i$, which together cover all the nodes of $B^j_i$, and whose total cost is at most $2 \log n\cdot 2^i$.

\subsection{Connecting the paths}~
We now bound the lengths of edges introduced by the append operation in the different cases. For a path $P$,
let $app(P)$ be the length of the edge used for appending $P$ to the path $S$ in the algorithm.

\begin{lemma} \label{lem:app-p}
For any $i$, $j$, and path $P\in {\cal P}^j_i$, $app(P)\leq 6\cdot 2^i$. Also, $app(P_g) \leq 6 \cdot 2^g$.
\end{lemma}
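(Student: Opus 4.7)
The plan is to identify, in each case, the possible first new node $u$ of the path being appended, use the fact that $u \notin S$ to obtain a lower bound on the ordering variable $x_{v, u}$ (where $v$ is the last node of $S$), and then apply constraint (\ref{lk-const}) to turn this into an upper bound on $d_{v, u}$.

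First I would verify that when Algorithm \ref{alg:latency} is about to append some $P \in \mathcal{P}^j_i$, the last node of $S$ is exactly $v^j_i$: both $P^j_i$ (appended right before $\mathcal{P}^j_i$) and every earlier path in $\mathcal{P}^j_i$ terminates at $v^j_i$, and the append operation always extends $S$ up to the endpoint of the appended path. Because $\mathcal{P}^j_i$ is constructed after splitting off all nodes outside $B^j_i$, every $P \in \mathcal{P}^j_i$ lives on $B^j_i \cup \{s, v^j_i\}$, so the first new node $u$ of $P$ is in $B^j_i$. The key observation is that $P^j_i$ was required to contain $A^j_i$, and every node of $P^j_i$ ends up in $S$ after the append; hence $A^j_i \subseteq S$, and therefore $u \in B^j_i \setminus A^j_i$.

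I would then compute: $u \notin A^j_i$ gives $x_{u,v^j_i} < \tfrac{2}{3} + \tfrac{2i-2+j}{24 \log n}$, and hence by (\ref{xij-const}), $x_{v^j_i, u} > \tfrac{1}{3} - \tfrac{2i-2+j}{24 \log n}$. Corollary \ref{cor:lstar} ensures $g \le 2 \log n + 1$, so for every $i \le g - 1$ and $j \le 2$ the subtracted term is at most $\tfrac{1}{6}$, giving $x_{v^j_i, u} > \tfrac{1}{6}$. Now I would apply (\ref{lk-const}) to the triple $(s, v^j_i, u)$. Using (\ref{xijk-const}) together with $x_{v^j_i, s} = 0 = x_{u, s}$ yields the identity $x_{s, v^j_i, u} = x_{v^j_i, u}$, so $\ell(u) \ge (d_{s, v^j_i} + d_{v^j_i, u}) \cdot x_{v^j_i, u} > \tfrac{1}{6}\, d_{v^j_i, u}$. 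Combined with $\ell(u) < 2^i$ (since $u \in B^j_i \subseteq V_i$), this yields $d_{v^j_i, u} < 6 \cdot 2^i$.

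The bound for $P_g$ follows by the same argument with $v^{j^*}_{i^*}$ in place of $v^j_i$, where $(i^*, j^*)$ is the last iteration with $V_{i^*} \neq \emptyset$ (so $2i^* - 2 + j^* \le 2(g-1) \le 4 \log n$). The first new node $u$ of $P_g$ lies in $V_g \cup \{t\}$; the case $u = t$ is immediate since $x_{v^{j^*}_{i^*}, t} = 1$, and otherwise the same reasoning ($u \notin A^{j^*}_{i^*}$ because $A^{j^*}_{i^*} \subseteq S$) gives $x_{v^{j^*}_{i^*}, u} > \tfrac{1}{6}$. Since every node of $V_g$ has latency less than $2^g$, constraint (\ref{lk-const}) again gives $d_{v^{j^*}_{i^*}, u} < 6 \cdot 2^g$. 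The only delicate step, and the one I would verify most carefully, is that the threshold $\tfrac{2}{3} + \tfrac{2i - 2 + j}{24 \log n}$ in the definition of $A^j_i$ is calibrated exactly so that $x_{v^j_i, u} > \tfrac{1}{6}$ holds uniformly across all iterations; this is the arithmetic that produces the constant $6$ in the statement and relies essentially on the $g \le 2\log n + 1$ bound from Corollary \ref{cor:lstar}.
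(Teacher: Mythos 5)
Your proposal is correct in substance and shares the paper's skeleton (identify the last node of $S$ and the first new node of the appended path, use non-membership in the relevant set $A$ to force the ordering variable above $\tfrac16$ via the threshold calibration and $g\le 2\log n+1$, then bound the latency of the new node by $2^i$ or $2^g$), but it differs in how the ordering bound becomes a distance bound. The paper converts $x_{uw}\ge \tfrac16$ into $d_{uw}\le 6\,\ell(w)$ through the flow variables: by constraint (\ref{fx-const}) the $w$-flow through the last node $u$ equals $x_{uw}\ge\tfrac16$, and this flow must still travel at least $d_{uw}$ to reach $w$, so $\ell(w)\ge\sum d\,f^w \ge \tfrac16 d_{uw}$. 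You instead invoke the triple-ordering constraint (\ref{lk-const}) on the triple $(s,v^j_i,u)$ together with the identity $x_{s,v^j_i,u}=x_{v^j_i,u}$ (which follows from (\ref{xijk-const}), $x_{v^j_i s}=x_{us}=0$, and nonnegativity). That derivation is valid (it is essentially the degenerate case of Lemma \ref{lem:ijk} that the paper only deploys later, for Lemma \ref{lem:app-pij}), so your route trades the flow-cost constraint for the ordering constraints; either suffices here.

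One imprecision to fix: you assert that when a path of $\mathcal{P}^j_i$ is appended, the last node of $S$ is exactly $v^j_i$. This can fail in a degenerate case: $v^j_i$ and all of $A^j_i$ may already lie in $S$ (a node can enter $S$ through an earlier set $A^{j'}_{i'}$ without being removed from its own class $V_i$), in which case appending $P^j_i$ adds nothing and the last node of $S$ is some earlier $v^{j'}_{i'}$ (or $s$). Your bound then concerns $d_{v^j_i,u}$ rather than the edge actually created. The repair is exactly the paper's formulation, and the one you already use for $P_g$: argue for whatever the last node is --- it is always $s$ (trivial) or some $v^{j'}_{i'}$ with $A^{j'}_{i'}\subseteq S$, and since every threshold $\tfrac23+\tfrac{2i'-2+j'}{24\log n}$ is at most $\tfrac56$, the same $x_{v^{j'}_{i'},u}>\tfrac16$ conclusion and the rest of your argument go through unchanged.
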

\begin{proof}
Let $u$ be the last node of the path $S$ before the append operation, $v^j_i$ be the last node of $P$, and $w$ be the first node of $P$ that does not appear in $S$. We need to bound $d_{uw}$, the distance from $u$ to $w$. 

We observe that $x_{wu} \leq 5/6$. If $u=s$, this is trivial. Otherwise, $u=v^{j'}_{i'}$ is the endpoint of some path constructed in an earlier iteration. Note that  $j'\leq 2$ and  $i'\leq g-1 \leq \log \ell(t) \leq 2 \log n$ by our assumption that $\ell(t)\leq n^2$, which means that $\frac{5}{6} \geq \frac{2}{3}+\frac{2i'-2+j'}{24\log n}$. So, if we had $x_{wu} > 5/6$, then $w$ would be included in the set $A^{j'}_{i'}$ and in the path $P^{j'}_{i'}$, and thus be already contained in $S$, which is a contradiction.

Consequently, $x_{uw} = 1 - x_{wu} \geq 1/6$. 
This means that the amount of $w$-flow that goes through $u$ is at least $1/6$. Since this flow has to reach $w$ after visiting $u$, it has to cover a distance of at least $d_{uw}$, thus adding at least $\frac{1}{6}\cdot d_{uw}$ to $\ell(w)$, the latency of $w$. Thus, $\ell(w)\geq \frac{1}{6} d_{uw}$, and $d_{uw} \leq 6 \ell(w)$. Now, if $w\in {\cal P}^j_i$, it must be in $B^j_i$, which, by definition, means that $w\in V_i$, and therefore $\ell(w)\leq 2^i$. So $app(P) = d_{uw}\leq 6\cdot 2^i$. If $w\in P_g$, then $app(P_g) \leq 6 \ell(w) \leq 6 \ell(t) \leq 6\cdot 2^g$.
\end{proof}

\medskip

To bound the cost of appending a path $P^j_i$ to $S$, we need an auxiliary lemma.

\begin{lemma} \label{lem:ijk}
For any $\epsilon > 0$, if $x_{uw}+x_{wv} \geq 1+\epsilon$, then $\ell(v) \geq \epsilon \cdot d_{uw}$.
\end{lemma}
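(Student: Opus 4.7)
The plan is to lower-bound the triple-ordering variable $x_{uwv}$ by $\epsilon$ and then invoke constraint~(\ref{lk-const}) directly. The key observation is that constraint~(\ref{xijk-const}) decomposes $x_{uw}$ into the three triple-orderings of $\{u,v,w\}$ in which $u$ precedes $w$, namely $x_{uw} = x_{vuw} + x_{uvw} + x_{uwv}$, and similarly $x_{wv}$ decomposes as $x_{wv} = x_{uwv} + x_{wuv} + x_{wvu}$. Thus $x_{uwv}$ is the only triple variable appearing in both decompositions, and this is precisely what drives the argument.

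First I would add these two identities, giving
\[
x_{uw} + x_{wv} \;=\; 2\, x_{uwv} \;+\; x_{vuw} + x_{uvw} + x_{wuv} + x_{wvu}.
\]
Next, combining $x_{uw} + x_{wu} = 1$ (constraint (\ref{xij-const})) with the two decompositions (one for $x_{uw}$ via the splits of (\ref{xijk-const}) and an analogous one for $x_{wu}$), all six triple-orderings of $\{u,v,w\}$ sum to $1$. Substituting this into the previous equation yields
\[
x_{uw} + x_{wv} \;=\; 1 + x_{uwv} - x_{vwu}.
\]
Since $x_{vwu} \geq 0$, rearranging gives $x_{uwv} \geq x_{uw} + x_{wv} - 1 \geq \epsilon$.

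Finally, I would apply constraint~(\ref{lk-const}) to the ordered triple $(u,w,v)$:
\[
\ell(v) \;\geq\; [d_{su} + d_{uw} + d_{wv}]\, x_{uwv} \;\geq\; d_{uw} \cdot x_{uwv} \;\geq\; \epsilon \cdot d_{uw},
\]
using nonnegativity of $d_{su}$ and $d_{wv}$. There is no real obstacle here; the only subtle point is noticing that the six triple-orderings of $\{u,v,w\}$ sum to $1$ (so that one can eliminate the four ``other'' orderings from the identity), but this is immediate from adding the decompositions of $x_{uw}$ and $x_{wu}$.
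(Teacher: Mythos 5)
Your proposal is correct and follows essentially the same route as the paper's proof: both isolate $2x_{uwv}$ by adding the two decompositions from constraint~(\ref{xijk-const}), use constraint~(\ref{xij-const}) to control the remaining ordering variables (your six-orderings-sum-to-one identity is just a repackaging of the paper's bound $x_{vw}+x_{wu}\leq 1-\epsilon$), deduce $x_{uwv}\geq\epsilon$, and finish with constraint~(\ref{lk-const}).
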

\begin{proof}
Using Constraint (\ref{xijk-const}) we have:
\begin{eqnarray*}
1+\epsilon &\leq& x_{uw} + x_{wv}   \\
&=& (x_{vuw} + x_{uvw} + x_{uwv}) 
  + (x_{uwv} + x_{wuv} + x_{wvu}) \\
 & = &  2 x_{uwv} + (x_{vuw} + x_{uvw})  + (x_{wuv} + x_{wvu}).
\end{eqnarray*}

On the other hand, $(x_{vuw} + x_{uvw}) + (x_{wuv} + x_{wvu}) \leq x_{vw} + x_{wu} = 2 - (x_{uw} + x_{wv}) \leq 1-\epsilon$, 
using again Constraint (\ref{xijk-const}), then Constraint (\ref{xij-const}), and the assumption of the lemma.
Therefore,  $2 x_{uwv} \geq (1+\epsilon) - (1-\epsilon) = 2\epsilon$, i.e. $x_{uwv} \geq \epsilon$.  
Then the claim follows using Constraint (\ref{lk-const}).
\end{proof}

\begin{lemma} \label{lem:app-pij}
For any $i$ and $j$, $app(P^j_i)\leq 24 \log n \cdot 2^i$.
\end{lemma}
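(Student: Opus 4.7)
The plan is to bound $d_{uw}$, where $u$ is the last vertex of the current path $S$ just before $P^j_i$ is appended and $w$ is the first vertex of $P^j_i$ not yet in $S$ (so that $app(P^j_i)=d_{uw}$). Two structural observations narrow the case analysis: since $P^j_i$ is constructed as a Hamiltonian path on $A^j_i\cup\{s,v^j_i\}$ and $s\in S$, I must have $w\in A^j_i$ or $w=v^j_i$; and since every previously appended path, whether some $P^{j'}_{i'}$ or one of ${\cal P}^{j'}_{i'}$, ends at its representative, $u$ is either $s$ (only when $(i,j)$ is the first processed iteration) or $v^{j'}_{i'}$ for the most recently processed $(i',j')$. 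I will abbreviate $\alpha_{i,j}=\frac{2}{3}+\frac{2i-2+j}{24\log n}$ and record two facts about it: the thresholds strictly increase along the iteration order, with consecutive gap exactly $\frac{1}{24\log n}$, and they are all bounded by $\frac{5}{6}$ (using $g-1\le 2\log n$).

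The main case, which I expect to be the chief obstacle, is $w\in A^j_i$ and $u=v^{j'}_{i'}$ for some previously processed $(i',j')$. The crux is to rule out $w\in A^{j'}_{i'}$: every vertex of $A^{j'}_{i'}$ lies on $P^{j'}_{i'}$, and the append operation sweeps into $S$ every vertex of $P^{j'}_{i'}$ not already present, so $w\in A^{j'}_{i'}$ would force $w\in S$, contradicting the choice of $w$. Hence $x_{wv^{j'}_{i'}}<\alpha_{i',j'}$, so by constraint~(\ref{xij-const}), $x_{uw}=1-x_{wu}>1-\alpha_{i',j'}$. Combining with $x_{wv^j_i}\ge\alpha_{i,j}$ (since $w\in A^j_i$) gives $x_{uw}+x_{wv^j_i}>1+(\alpha_{i,j}-\alpha_{i',j'})\ge 1+\frac{1}{24\log n}$. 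Applying Lemma~\ref{lem:ijk} to the pairwise distinct triple $(u,w,v^j_i)$ with $\epsilon=\frac{1}{24\log n}$ yields $d_{uw}\le 24\log n\cdot\ell(v^j_i)\le 24\log n\cdot 2^i$.

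The remaining cases I will dispatch similarly but more simply. If $u=s$ and $w\in A^j_i$, then $x_{sw}=1$ and $x_{wv^j_i}\ge\frac{2}{3}$, so Lemma~\ref{lem:ijk} with $\epsilon=\frac{2}{3}$ gives $d_{sw}\le\frac{3}{2}\ell(v^j_i)$. If $w=v^j_i$, the same avoidance argument shows $v^j_i\notin A^{j'}_{i'}$ for any previously processed $(i',j')$, whence $x_{uv^j_i}>1-\alpha_{i',j'}\ge\frac{1}{6}$. By constraint~(\ref{fx-const}) this amount of $v^j_i$-flow passes through $u$, and a flow-path decomposition (each path through $u$ incurs at least cost $d_{uv^j_i}$ on its $u$-to-$v^j_i$ portion) gives $\ell(v^j_i)\ge x_{uv^j_i}\cdot d_{uv^j_i}$, so $d_{uv^j_i}\le 6\cdot 2^i$; the subcase $u=s$, $w=v^j_i$ is immediate from $d_{sv^j_i}\le\ell(v^j_i)$. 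In every case the resulting bound is at most $24\log n\cdot 2^i$. The only subtle point throughout is verifying that the correct previous iteration $(i',j')$ is used, in particular when some intermediate iterations were skipped because $V_{i'}=\emptyset$; in that situation the threshold gap only grows, so the argument is unaffected.
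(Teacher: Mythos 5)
Your proof is correct and follows essentially the same route as the paper: the central step in both is to note that $w\notin A^{j'}_{i'}$ (else it would already lie on $S$), giving $x_{uw}+x_{wv^j_i}\geq 1+\frac{1}{24\log n}$ from the staggered thresholds, and then to invoke Lemma~\ref{lem:ijk} with $\epsilon=\frac{1}{24\log n}$. Your remaining cases ($u=s$ or $w=v^j_i$) are handled by the same flow/latency argument the paper reuses from Lemma~\ref{lem:app-p} for its case $w\in V_i$, so the only difference is a cosmetic reshuffling of the case analysis.
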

\begin{proof}
Let $u$, $v^j_i$, and $w$ be as in the proof of Lemma \ref{lem:app-p}.
To bound $d_{uw}$, we consider two cases.

{\bf Case 1:} If $w\in V_i$, we apply the same proof as for Lemma \ref{lem:app-p} and conclude that $app(P^j_i)\leq 6 \cdot 2^i$.

{\bf Case 2:} If $w\not\in V_i$, let $(i',j')$ be an earlier iteration of the algorithm in which node $u = v^{j'}_{i'}$ was added to $S$. Since $w\notin S$, it must be that $w\notin A^{j'}_{i'}$, and thus $x_{wu} < \frac{2}{3}+\frac{2i'-2+j'}{24\log n}$. On the other hand, since $w\in A^j_i$, it must be that $x_{wv^j_i} \geq \frac{2}{3}+\frac{2i-2+j}{24\log n}$. Because $2i'+j' \leq 2i+j-1$, we have
\begin{eqnarray*}
x_{uw}+x_{wv^j_i}  &=& (1 - x_{wu})+x_{wv^j_i} \\
 &\geq & 1 - \frac{2i'-2+j'}{24\log n} +\frac{2i-2+j}{24\log n} \\
 &\geq & 1 + \frac{1}{24\log n}.
\end{eqnarray*}
Using Lemma \ref{lem:ijk}, we get that 
$app(P^j_i) = d_{uw} \leq 24 \log n \cdot \ell(v^j_i) \leq 24 \log n \cdot 2^i$. 
\end{proof}

\begin{lemma}\label{lem:Pi-late}
Suppose that a node $v$ is first added to path $S$ in iteration $k$ of the outer loop of the algorithm. Then the latency of $v$ in $S$ is at most $\delta_2 \log n \cdot 2^k$, for some constant $\delta_2>0$.
\end{lemma}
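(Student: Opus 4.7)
The plan is to bound the latency of $v$ in $S$ by the total cost of all edges added to $S$ during iterations $1,2,\dots,k$. Since $v$ is first added to $S$ in iteration $k$, and shortcutting over repeated nodes can only decrease latencies, it suffices to upper-bound the length of the prefix of $S$ built through the end of iteration $k$ (before any shortcutting).

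To that end, I would account separately, for each $i\leq k$ and each $j\in\{1,2\}$, the four kinds of edges that are added during the inner loop iteration $(i,j)$: (a) the appending edge that connects the current end of $S$ to $P_i^j$, (b) the edges of the path $P_i^j$ itself, (c) the appending edges used to splice each of the $2\log n$ paths of $\mathcal{P}_i^j$ into $S$, and (d) the edges of the paths in $\mathcal{P}_i^j$ themselves. By Lemma \ref{lem:app-pij}, (a) contributes at most $24\log n\cdot 2^i$; by Step \ref{stp:pij} of the algorithm, (b) contributes at most $\delta_1\log n\cdot 2^i$; by Lemma \ref{lem:app-p} applied to each of the $2\log n$ paths in $\mathcal{P}_i^j$, (c) contributes at most $(2\log n)\cdot 6\cdot 2^i = 12\log n\cdot 2^i$; and by Step \ref{stp:pijs}, (d) contributes at most $2\log n\cdot 2^i$. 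Thus the total length added during the two inner iterations for a given outer index $i$ is at most $c\log n\cdot 2^i$ for some absolute constant $c$.

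Summing the geometric series, the total length of $S$ through the end of iteration $k$ is at most
\[
\sum_{i=1}^{k} c\log n\cdot 2^i \;\leq\; 2c\log n\cdot 2^k,
\]
which gives the claimed bound with $\delta_2 = 2c$. Since $v$ lies somewhere within the prefix of $S$ built by the end of iteration $k$, its latency in $S$ is bounded by the same quantity.

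There is no real obstacle here beyond careful bookkeeping of the four contributions per iteration; the only subtle point is remembering that the $2\log n$ paths of $\mathcal{P}_i^j$ each require their own appending edge, so the cost charged by Lemma \ref{lem:app-p} must be multiplied by $2\log n$ (not taken just once). Once that is handled, the geometric sum collapses to $O(\log n\cdot 2^k)$ as desired.
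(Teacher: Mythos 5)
Your proposal is correct and follows essentially the same approach as the paper: both bound the latency of $v$ by summing, over all iterations $(i,j)$ with $i\leq k$, the four contributions $len(P^j_i)$, $\sum_{P\in{\cal P}^j_i}len(P)$, $app(P^j_i)$, and $\sum_{P\in{\cal P}^j_i}app(P)$, using Lemmas \ref{lem:app-p} and \ref{lem:app-pij} together with the construction guarantees, and then collapsing the geometric series to $O(\log n)\cdot 2^k$. Your explicit remarks about shortcutting only decreasing latencies and about multiplying the append cost by $2\log n$ are correct bookkeeping points that match the paper's accounting.
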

\begin{proof}
Let $len(P)$ denote the length of a path $P$.  The latency of node $v$ on $S$ is at most:
\begin{eqnarray*}
&& \sum_{i=1}^k \sum_{j=1}^2 \left[ 
len(P^j_i) + \sum_{P\in {\cal P}^j_i} len(P) 
+ app(P^j_i) + \sum_{P\in {\cal P}^j_i} app(P) \right] \\
&\leq& 
\sum_{i=1}^k \sum_{j=1}^2 \left[ 
\delta_1 \log n \cdot 2^i + 2 \log n \cdot 2^i   
+ 24 \log n \cdot 2^i + 2 \log n \cdot 6 \cdot 2^i \right] \\
 & \leq & \delta_2 \log n \cdot 2^k  
\end{eqnarray*}
\end{proof}

\medskip

Suppose that $n_i$ is the number of nodes that are originally placed into the set  $V_i$. Since a node $v$ is originally placed in $V_i$ if $\ell(v)\geq 2^{i-1}$, the value of the LP solution $L^*$ can be bounded by:

\begin{equation}\label{eqn:opt-lower}
L^* ~=~ \sum_v \ell(v)  ~\geq~ \sum_{i=1}^g n_i \, 2^{i-1}.
\end{equation}

Let $n'_i$ denote the size of $V_i$ at the beginning of iteration $i$ of the outer loop. Note that $n'_i$ may be larger than $n_i$ since some nodes may have been moved to $V_i$ in Step \ref{stp:move-nodes} of the previous iteration.

\begin{claim} \label{cl:cov}
For any $i$, the size of the set $V_i$ at the end of iteration $i$ is at most $n'_i/4$.
\end{claim}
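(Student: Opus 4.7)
\medskip
\noindent\textbf{Proof proposal.} The plan is to show that each of the two inner iterations ($j=1$ and $j=2$) shrinks $|V_i|$ by a factor of roughly one half, so their combined effect is a reduction by a factor of four. The key LP ingredient is Constraint (\ref{xij-const}), which states $x_{uw}+x_{wu}=1$ for every pair of distinct nodes and hence guarantees that, for any two nodes $u,w\in V_i$, at least one of $x_{uw},x_{wu}$ is at least $\tfrac{1}{2}$.

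Fix one of the inner iterations, and suppose the current $V_i$ has size $m$. For each $v\in V_i$, let $b_v = |\{u\in V_i\setminus\{v\}:x_{uv}\geq \tfrac{1}{2}\}|$. By the observation above, each unordered pair $\{u,w\}\subseteq V_i$ contributes at least $1$ to $b_u+b_w$, so $\sum_{v\in V_i} b_v \geq \binom{m}{2}$. By averaging, some $v\in V_i$ has $b_v\geq (m-1)/2$. Since line \ref{step:vij} picks $v_i^j$ as the $\argmax$ of exactly this quantity, we conclude $|B_i^j|\geq (m-1)/2$. Line \ref{stp:decv} then removes from $V_i$ the set $\{v_i^j\}\cup A_i^j\cup B_i^j$, which in particular contains $\{v_i^j\}\cup B_i^j$; since $x_{vv}$ is undefined (equivalently $v_i^j\notin B_i^j$), the size of the removed set is at least $1+(m-1)/2=(m+1)/2$. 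Therefore the new $|V_i|$ is at most $m-(m+1)/2=(m-1)/2$, establishing the ``size halved'' comment.

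Applying this bound twice, starting from $m=n'_i$, gives $|V_i|\leq \bigl((n'_i-1)/2-1\bigr)/2 = (n'_i-3)/4 \leq n'_i/4$ at the end of iteration $i$. The main things to be careful about are (i) correctly invoking Constraint (\ref{xij-const}) (these variables are LP values, not rounded, so the equality is exact), (ii) retaining the ``$+1$'' from $v_i^j$ by noting $v_i^j\notin B_i^j$, and (iii) the trivial corner cases $n'_i\in\{0,1,2\}$, where a single inner iteration already empties $V_i$ and the bound holds vacuously. I do not anticipate any real obstacle: the claim is essentially a counting argument combined with the $\argmax$ choice in line \ref{step:vij}.
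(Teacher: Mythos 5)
Your proof is correct and follows essentially the same route as the paper's: both arguments use Constraint (\ref{xij-const}) to show that in the ``majority'' digraph on the current $V_i$ some vertex has in-degree at least $(|V_i|-1)/2$, note that the $\argmax$ choice of $v_i^j$ attains this so that $|B_i^j\cup\{v_i^j\}|$ is at least half of $|V_i|$, and then apply the halving twice (for $j=1,2$). Your version merely spells out the averaging over pairs and the corner cases that the paper leaves implicit.
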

\begin{proof}
Consider the iteration $(i,j=1)$.  Note that the vertex $v^j_i$ is chosen precisely to maximize the number of nodes $u$ in $V_i$ with $x_{uv^j_i}\geq 1/2$, which is the size of the set $B^j_i$. If we imagine a directed graph $H$ on the set of vertices $V_i$, in which an edge $(u,w)$ exists whenever $x_{uw}\geq 1/2$, then $v^j_i$ is the vertex with highest in-degree in this graph. Now, from  Constraint (\ref{xij-const}), it's not hard to see that some vertex in $H$ will have in-degree at least $(n'-1)/2$. So the number of nodes removed from $V_i$ in step \ref{stp:decv} of the algorithm is at least $|B^j_i \cup \{v^j_i\}| \geq n'/2$, and size of $V_i$ decreases at least by a factor of two. Similarly, at least half of the remaining nodes of $V_i$ are removed in the iteration $j=2$, so overall the size of $V_i$ decreases at least by a factor of four.
\end{proof}

\medskip

We now show that the total latency of the final solution $S$ 
is at most $O(\log n)\cdot L^*$.

\begin{proofof}{Theorem \ref{thm:lat}}
From Claim \ref{cl:cov}, it follows that at most
a $1/4$ fraction of the $n'_i$ nodes that are in $V_i$ at the beginning of iteration $i$ are moved to the set $V_{i+1}$ at the end of this iteration. Thus, for any $1<i\leq g$,~ $n'_i \leq n_i + {n'_{i-1}}/{4}$. This implies that $n'_i \leq \sum_{h=1}^i n_h/4^{i-h}$.

Now we claim that the total latency of the solution $S$ is at most
$\sum_{i=1}^{g} n'_i \cdot \delta_2 \log n \cdot 2^i$.
This is because at most $n'_i$ nodes are added to $S$ in iteration $i$, and each such node has latency
at most $\delta_2 \log n \cdot 2^i$ (using Lemma \ref{lem:Pi-late}). 
Therefore, the total latency of the solution is at most:
\begin{eqnarray*}
\sum_{i=1}^{g} n'_i \cdot \delta_2 \log n \cdot 2^i
&\leq& \sum_{i=1}^{g} \delta_2 \log n \cdot 2^i \cdot \sum_{h=1}^i \frac{n_h}{4^{i-h}} \\
&=& \delta_2 \log n \sum_{i=1}^{g} \sum_{h=1}^i 2^{h-i} \cdot  2^h \, n_h \\
&\leq& \delta_2 \log n \sum_{h=1}^{g} 2^h\, n_h \sum_{i=0}^\infty \frac{1}{2^i} \\
&\leq& O(\log n) \cdot L^*,
\end{eqnarray*}
using the bound on $n'_i$, re-ordering the summation, and using inequality (\ref{eqn:opt-lower}). Combined with Corollary \ref{cor:lstar}, this proves the theorem.
\end{proofof}


\subsection{Extensions}
The above algorithm can be easily extended to the more general setting in which every node of the graph comes with a weight $c(v)$ and the goal is to find a Hamiltonian $s$-$t$ path to minimize the total {\em weighted} latency, where the weighted latency of a node $v_i$ is equal to 
$c(v)\cdot\sum_{j=1}^{i-1} d_{v_jv_{j+1}}$. 
This requires changing the objective function of LP (\ref{lp:lat}) to $\sum_{v\neq s} c(v)\ell(v)$ and changing the definition of $v_i^j$ on line \ref{step:vij} of Algorithm \ref{alg:latency} to maximize the total weight, instead of the number, of vertices in $B_i^j$.

We also note that our approximation guarantee for directed latency is of the form $O(\gamma + \log n)$, where $\gamma$ is the integrality gap of ATSPP. So an improvement of the bound on $\gamma$ would not immediately lead to an improvement for directed latency. 

\section*{Acknowledgements} This work was partly done while the second author was visiting Microsoft Research New England; he thanks MSR for hosting him. We also thank the anonymous referees for helpful comments.

\bibliographystyle{plain}
\bibliography{../../bib/names,../../bib/conferences,../../bib/bibliography}

\end{document}